\newcommand{\dn}{\mathbf{d}}
\newcommand{\R}{\mathbb{R}}
\newcommand{\interior}{\mathrm{int}}
 \newcommand{\zero}{\mathbf{0}}
\newtheorem{theorem}{Theorem}
\newtheorem{corollary}{Corollary}
\newtheorem{example}{Example}
\newtheorem{lemma}{Lemma}
\newtheorem{remark}{Remark}
\renewcommand{\top}{\rm T}
\begin{document}
\title{Finite- and Fixed-Time Nonovershooting Stabilizers and Safety Filters by Homogeneous Feedback}
\author{Andrey Polyakov and Miroslav Krstic
\thanks{Andrey Polyakov is with Inria, Lille, France, {andrey.polyakov@inria.fr}.

Miroslav Krstic is with the University of California, San Diego, USA, {krstic@ucsd.edu}. 
}}
\maketitle

\begin{abstract}
    Non-overshooting stabilization is a form of safe control where the setpoint chosen by the user is at the boundary of the safe set. Exponential non-overshooting stabilization, including suitable extensions to systems with deterministic and stochastic disturbances, has  been solved by the second author and his coauthors. In this paper we develop homogeneous feedback laws for fixed-time nonovershooting stabilization for nonlinear systems that are input-output linearizable with a full relative degree, i.e., for systems that are diffeomorphically equivalent to the chain of integrators. These homogeneous feedback laws can also assume the secondary role of `fixed-time safety filters' (FxTSf filters) which keep the system within the closed safe set for all time but, in the case where the user's nominal control commands approach to the unsafe set, allow the system to reach the boundary of the safe set no later than a desired time that is independent of nominal control and independent of the value of the state at the time the nominal control begins to be overridden. 
\end{abstract}

\section{Introduction}

Nonovershooting control is aimed at solving  a  tracking problem under certain state/output constrains (see, e.g. \cite{WielandAllgower2007:NOLCOS}, \cite{KrsticBement2006:TAC}). Such a control can be  employed in a ''safety filter'' framework to override  a potentially unsafe nominal controller  \cite{abel2022prescribedtime}.
Nonovershooting and safety controllers have been designed for both linear \cite{PhillipsSeborg1988:IJC}, \cite{Longchamp_etal1993:Aut}, \cite{DarbhaBhattacharrya2003:TAC} and nonlinear plants \cite{KrsticBement2006:TAC}, \cite{LindemannDimarogonas2019:IEEECSL}, \cite{Garg_etal2022:IEEE_CSL}.
Inspired by robust non-overshooting control under deterministic disturbances in~\cite{KrsticBement2006:TAC}, i.e., by stabilization to an equilibrium at the barrier along with input-to-state safety (ISSf), mean-square stabilization of stochastic nonlinear systems to an equilibrium at the barrier, along with a guarantee of non-violation of the barrier in the mean sense, is solved in~\cite{9280364}. 

In many cases,  the corresponding synthesis was based on the so-called control barrier functions \cite{Ames_etal2017:TAC}, \cite{Jankovic2018:Aut}, which have
been extensively used in various control applications such as automotive systems \cite{Ames_etal2015:CDC}, \cite{Rahman_etal2021:ACC} and multi-agent robotics \cite{Wang_etal2017:TR}, \cite{SantilloJankovic2021:ACC}.    
The exponential control barrier functions were reported in \cite{NguyenSreenath2016:ACC} and allowed the use of simple linear tools for a control design.
The nonovershooting prescribed-time control  has been developed recently \cite{abel2022prescribedtime} for the integrator chain based on a novel
time-varying backstepping procedure  \cite{Song_etal2017:Aut,KRISHNAMURTHY2020108860}. Such a control allows safety requirements to be fulfilled for a prescribed time interval $[0,T]$ with $T>0$.  Prescribed-time controllers \cite{Song_etal2017:Aut}, which have been developed even for PDEs \cite{Espitia_etal2019:Aut}, form a special subset of both finite-time \cite{BhatBernstein2000:SIAM_JCO} and fixed-time \cite{Polyakov_etal2015:Aut} control algorithms. The concepts of finite-time and fixed-time barrier functions were introduced in \cite{Li_etal2018:IROS} and \cite{Garg_etal2022:IEEE_CSL}, respectively. 

\subsection{Homogeneous nonovershooting stabilization} 
Inspired by the mentioned results, this paper deals with generalized homogeneous nonovershooting controller design. The goal is to synthesize feedback laws that ensure finite-time stabilization (FnTS) and fixed-time stabilization (FxTS) without the overshoot of the system's output. 

Homogeneity is a dilation symmetry studied in many branches of pure and applied mathematics.
For instance, homogeneous differential equations and homogeneous control systems are well studied in the literature \cite{Zubov1958:IVM}, \cite{Hermes1986:SIAM_JCO}, \cite{Kawski1991:ACDS}, \cite{CoronPraly1991:SCL}, \cite{Grune2000:SIAM_JCO}, \cite{Hong2001:Aut}. Being similar (in some sense) to linear control systems, they may demonstrate faster convergence \cite{BhatBernstein2005:MCSS}, better robustness \cite{Andrieu_etal2008:SIAM_JCO} and smaller "peaking effect" \cite[Chapter 1]{Polyakov2020:Book}. To the best of the authors' knowledge, a homogeneous nonovershooting control has never been designed even for the integrator chain.

Finite-time stabilization (FnTS) without an overshoot is formulated as the problem of constructing a feedback law (for example, of a homogeneous type) such that, given an initial state $x_0\in \R^n$,  the feedback guarantees that the trajectory of the closed-loop system initiated in $x_0$ at the time instant $t=0$ reaches the origin no later than a time instant $T(x_0)$ without overshoot in the first coordinate. If, given arbitrary $T>0$, the origin is reached by the time $T$ for any $x_0$, such a non-overshooting feedback is referred to as fixed-time stabilizing (FxTS).

\subsection{Homogeneous safety filters}
To prevent overshoots that may result from the user's nominal control, we also design what are commonly referred to as `safety filters' and have a particularly simple structure for integrator chains \cite{abel2022prescribedtime}. 

\paragraph{Finite- and fixed-time safety}
Since the notion of fixed-time barrier functions has already appeared in the literature \cite{Garg_etal2022:IEEE_CSL}, let us explain what {\em fixed-time safety} (FxTSf) means to us. While the precise meaning of this property and the weaker {\em finite-time safety} (FnTSf) property will become clear in the statements of Theorems \ref{thm:FxTSf} and \ref{thm:FnTSf}, respectively, here we only give a descriptive, intuitive definition. We consider control systems subject to a nominal control $u_{\rm nom}$ and equipped with a ``safety filter'' feedback law $u = {\cal F}(u_{\rm nom}, x)$. The safety filter is said to ensure FxTSf with fixed time $T$ if it ensures safety {\em and} the following holds for all initial conditions inside the safe set: whenever the applied control ${\cal F}(u_{\rm nom}(t), x(t))$ is different from the nominal control $u_{\rm nom}(t)$ over an entire time interval of duration of $T$, the solution reaches the boundary of the safe set no later than at the end of that time interval.

Clearly, FnTSf and FxTSf 
impart {\em less safety} on the system than the conventional exponential and asymptotic safety properties, under which  the trajectories remain away from the boundary for all finite times. This is precisely the point as FnTSf/FxTSf filters are less conservative. They allow the nominal performance to be achieved sooner, with less distortion. In the language of the ``safety vs. agility'' tradeoff \cite{Ames_etal2017:TAC}, the FxTSf filters are the most agile filters available. 

A comparison with the {\em prescribed-time safety} (PTSf) filters in \cite{abel2022prescribedtime} is in order. Both the PTSf and FxTSf filters ensure safety while allowing convergence to the safety boundary in an amount of time that is independent of the initial condition. The difference is that FxTSf are applicable in infinite-time safety operation, whereas PTSf are ideal when a safety prohibition is only of finite duration. 

\paragraph{Release time and restraint time}
The time $T$ figures in both the PTSf and FnTSf/FxTSf approaches. It is important to understand the different roles that $T$ plays in the two approaches. In PTSf, $T$ is the time after which the prohibition of entering the unsafe set is lifted. Hence, $T$ should be understood in PTSf as {\em release time} (from the safety constraint). In FnTSf/FxTSf, the safety constraint applies in perpetuity---the unsafe set is prohibited for all time. So, in FnTSf/FxTSf, $T$ is the length of a period that can commence at any moment and, if $ u_{\rm nom}$ causes itself to be overridden  by the safety filter ${\cal F}$ over the entire interval, the trajectory will necessarily be permitted to reach the boundary of the unsafe set, but not enter it. Hence in FnTSf/FxTSf, $T$ should be understood as {\em restraint time} since the safety filter restrains the trajectory away from the boundary for no longer than $T$. The release time is a property bestowed upon the system by a PTSf safety filter. Likewise, the restraint time is a property bestowed upon the system by the FnTSf/FxTSf safety filter ${\cal F}$. In FnTSf the restraint time $T$ may depend on the value of $x$ at the moment the safety override kicks in and on the nominal control $u_{\rm nom}$, whereas in FxTSf the restraint time can be arbitrarily assigned with the filter and is independent of $x$ and $ u_{\rm nom}$.

\subsection{Contributions and organization} To summarize, this paper's contributions are the design and the associated stability and safety analyses for
\begin{enumerate}
    \item homogeneous finite/fixed-time nonovershooting stabilizing controllers;
    \item homogeneous finite/fixed-time safety filters. 
\end{enumerate}
The paper is organized as follows. Section \ref{sec:problem} presents the problem statement. Short preliminaries
about the homogeneity are given in Section \ref{sec:hom}. A homogeneous nonovershooting controller design and issues of its tuning are discussed in Section \ref{sec:hom_stab}.  Section \ref{sec:hom_safe} is devoted to homogeneous safety filters. As an example,  linear, homogeneous and prescribed-time safety
filters for the double integrator are compared in Section \ref{sec:ex}. Section \ref{sec:con} presents concluding remarks.  Some auxiliary results are given in Section \ref{sec:app}.


\subsection{Notation}
\begin{itemize}
\item $\R$ is the field of reals and $\R_+=\{\alpha \in \R: \alpha> 0\}$.
\item $e_i=(0,\ldots,0,1,0,\ldots,0)^{\top}\in \R^n$ is the $i$-th element of the canonical Euclidean basis in $\R^n$.  
\item $\Sigma_{-}=\{x\in\R^{n}: e^{\top}_1x\leq 0\}$ is the half-space in $\R^n$ that specifies the safety zone of the system.
\item $\interior S$ denotes an interior of the set $S\subset\R^n$.
\item $P\succ 0$ (resp. $P\prec 0$) means that the symmetric matrix $P\in \R^{n\times n}$ is positive (resp. negative) definite.
\item The matrix $P^{\frac{1}{2}}\!=\!M$ is such that $M^2\!=\!P$. By definition, $P^{-\frac{1}{2}}\!=\!\left(P^{\frac{1}{2}}\right)^{-1}$. Notice, for any matrix $P\succ 0$ there always exists a unique matrix $P^{\frac{1}{2}}\succ 0$.  

\item $\lambda_{\max}(P)$ (resp. $\lambda_{\min}(P)$) denotes a maximal (resp. minimal) eigenvalue of the symmetric matrix $P$.
\item $\mathrm{diag} \{p_1,\ldots,p_n\}\in \R^{n\times n}$ denotes a diagonal matrix with the elements $p_1$, $p_2$,\ldots,$p_n$ on the main diagonal.
\end{itemize}
\section{Problem statement}\label{sec:problem}
Consider the system
\begin{equation}
\dot x=Ax+Bu, \quad t>0, \quad x(0)=x_0,\label{eq:main_system}
\end{equation}	
where $x\in \R^n$, $u\in \R$ and 
\begin{equation}\label{eq:AandB}
A=\left(
\begin{array}{cccccc}
 0 & 1& 0 & \cdots  &0\\
 \vdots & \ddots& \ddots & \ddots &  \vdots\\
  \vdots &   & \ddots &\ddots & 0\\
  \vdots &   & & \ddots  &1\\
   0 & \cdots  & \cdots& \cdots &0
\end{array}
\right), \quad B=\left(
\begin{array}{c}
 0\\
\vdots\\
0\\
1
\end{array}
\right).
\end{equation}

Given $x_0\in \interior \Sigma_-$ and $T>0$, we need to design a state feedback control $u=u_{\rm h}(x)$ such that 
\begin{itemize}
	\item[1)] $u_{\rm h}\in C(\R^{n}\backslash\{\zero\}, \R)$ is a uniformly bounded function of the state;
    \item[2)] the closed-loop system  is globally uniformly finite-time stable\footnote{A system $\dot x=f(x), t>0, x(0)=x_0$ is said to be globally uniformly finite-time stable if it is Lyapunov stable and there exists a locally bounded function $\tilde T:\R^n\rightarrow [0,+\infty)$ such that $x(t,x_0)=\zero, \forall t\geq \tilde T(x_0)$ for any solution $x(t,x_0)$ of the system.} with 
    \begin{equation}
    	 x(t,x_0)=\zero,  \quad \forall t\geq T;
    \end{equation}
    \item[3)]  $x(t,x_0)\in \Sigma_{-}$  for all $t\geq 0$.
\end{itemize}
Such a controller (if it exists) steers the state vector at zero without overshoot in  the first component, i.e., $e_1^{\top}x(t,x_0)\leq 0, \forall t\geq 0$. 
In this paper we design a  \textit{homogeneous nonovershooting finite-time controller} $u_{\rm h}$
allowing also a stabilization in a fixed-time $T>0$ independently of $x_0$.
Our aim also  is to design a `safety filter' \cite{abel2022prescribedtime} which applies the user's nominal control $u_{\rm nom}$ while the system operates in the safe set and overrides $u_{\rm nom}$ with the nonovershooting homogeneous control $u_{\rm h}$ in the domain where the user's nominal control commands operation in the unsafe set.

\section{Preliminaries: Homogeneity}\label{sec:hom}
\subsection{Dilation symmetry of functions and vector fields}
Homogeneity is a symmetry 
of a set or a mapping with respect to a group of transformations called dilation (see, e.g. \cite{Zubov1958:IVM, Hermes1986:SIAM_JCO, Kawski1991:ACDS, BhatBernstein2005:MCSS, Andrieu_etal2008:SIAM_JCO, Polyakov2020:Book}). In this paper we deal only with the following  weighted dilations: 
\begin{equation}\label{eq:dilation}
\dn(s)=e^{sG_{\dn}}, \quad G_{\dn}=\mathrm{diag}\{n-i+1\}_{i=1}^n,
\end{equation}
In a more general case, a linear dilation \cite{Polyakov2020:Book} in $\R^n$ is defined by an arbitrary anti-Hurwitz matrix $G_{\dn}\in \R^{n\times n}$.
Recall \cite{Kawski1991:ACDS} that \textit{a vector field $f:\R^n\rightarrow \R^n$ (a function  $h: \R^n\rightarrow \R$)
is said to be $\dn$-homogeneous of degree $\mu\in \R^n$ if
\begin{equation}
f(\dn(s)x)=e^{\mu s}\dn(s)f(x), \quad\forall x\in \R^n, \quad\forall s\in \R
\end{equation}
\begin{equation}
\mbox{(resp. } h(\dn(s)x)=e^{\mu s}h(x), \quad\forall x\in \R^n, \quad\forall s\in \R\text{)}.
\end{equation}
}
Simple computations show that the linear vector field $x\mapsto Ax$ is $\dn$-homogeneous of degree $-1$: 
\begin{equation}
    A\dn(s)=e^{-s}\dn(s)A, \quad \forall s\in \R
\end{equation}
where the dilation $\dn$ is given by \eqref{eq:dilation} and the matrix $A$ is defined by  \eqref{eq:AandB}.

Homogeneous control system are similar (in some sense)  to linear control systems, but they may have some additional useful properties such as better robustness, faster convergence and smaller overshoot \cite[Chapter 1]{Polyakov2020:Book}.
It is well known \cite{Zubov1964:Book}, \cite{Nakamura_etal2002:SICE}, \cite{BhatBernstein2005:MCSS} that \textit{any asymptotically stable homogeneous system 
\begin{equation}\label{eq:f_system}
    \dot x=f(x),t>0, \quad\quad x(0)=x_0, 
\end{equation} of negative (positive) degree is globally uniformly finite-time
(nearly fixed-time\footnote{The system is globally uniformly nearly fixed-time stable if it is globally uniformly asymptotically stable and for any $r>0$ there exists $T_r>0$ such that $\|x(t,x_0)\|\leq r, \forall t\geq T, \forall x_0\in \R^n$.}) stable}, where $f\in C(\R^n\backslash\{\zero\})$. 
 To guarantee a finite-time stabilization of the system \eqref{eq:main_system},\eqref{eq:AandB}, it is sufficient to design an asymptotically stabilizing homogeneous control $u:\R^n\rightarrow \R$ such that $u(\dn(s)x)=u(x), \forall x\in \R^n, \forall s\in\R$. 
 Several schemes of homogeneous and locally homogeneous control design for the chain of integrators are presented in the literature \cite{CoronPraly1991:SCL, BhatBernstein2005:MCSS, Andrieu_etal2008:SIAM_JCO}. In this paper we utilize the technique proposed in \cite{Polyakov_etal2015:Aut}, since it allows simple rules for control parameters tuning and the settling time adjustment to be applied.

\subsection{Canonical Homogeneous Norm}
Inspired by \cite{Polyakov_etal2018:TAC} let us define the so-called canonical  homogeneous norm  $\|\cdot\|_{\dn}:\R^n\rightarrow [0,+\infty)$ as follows $\|\zero\|_{\dn}=0$ and 
\begin{equation}\label{eq:hom_norm}
\|x\|_{\dn}\!=\!e^{s_x} \;\; : \;\; \|\dn(-s_x)x\|=1, \quad  x\neq \zero, 
\end{equation} 
where $\|x\|=\sqrt{x^{\top} P x}$ is the weighted Euclidean norm,
\begin{equation}
P=P^{\top}\in \R^{n\times n} \;\;:\;\;  PG_{\dn}+G_{\dn}P\succ 0, \;\; P\succ 0.
\end{equation}
The latter matrix inequalities guarantee that the linear dilation $\dn$ is monotone, i.e., the function $s\mapsto \|\dn(s)x\|$ is monotone for any $x\in \R^n$.
One can be shown \cite{Polyakov_etal2018:TAC} that $\|\cdot\|_{\dn}\in C(\R^n)\cap C^{1}(\R^n\backslash\{\zero\})$ and 
\begin{itemize}
    \item $\|\dn(s)x\|_{\dn}=e^{s}\|x\|_{\dn}$ for all $x\in \R^n$ and all $s\in \R$.
    \item $\|x\|=1 \Leftrightarrow \|x\|_{\dn}=1$ and $\|x\|\leq 1 \Leftrightarrow \|x\|_{\dn}\leq 1$.
\end{itemize}
Moreover, for $x\neq \zero$ we have
\begin{equation}\label{eq:d_hom_norm}
    \tfrac{\partial \|x\|_{\dn}}{\partial x}=\|x\|_{\dn}\tfrac{ x^{\top}\dn^{\top}(-\ln \|x\|_{\dn}) P \dn(-\ln \|x\|_{\dn})}{x^{\top}\dn^{\top}(-\ln \|x\|_{\dn}) P G_{\dn}\dn(-\ln \|x\|_{\dn})x}.
\end{equation}
Below we utilize the homogeneous norm $\|x\|_{\dn}$ for both control design and stability analysis (as a Lyapunov function).
Since the homogeneous norm is defined implicitly, then,
to compute  $\|x\|_{\dn}$  for a given $x\neq \zero$, the equation $\|\dn(-s_x)x\|=1$ needs to be solved with respect to $s_x\in \R$. 
 \begin{example}\em
For $G_{\dn}=\left(\begin{smallmatrix}
 2 & 0\\
 0 & 1
\end{smallmatrix}\right)$, by definition, $\|x\|_{\dn}=V$ is a positive definite  solution  
\begin{equation}
\left(
\begin{smallmatrix}
x_1\\
x_2
\end{smallmatrix} \right)^{\top} 
\left(
\begin{smallmatrix}
V^{-2} & 0\\
0 & V^{-1}
\end{smallmatrix} \right)
P\left(
\begin{smallmatrix}
V^{-2} & 0\\
0 & V^{-1}
\end{smallmatrix} \right)
\left(
\begin{smallmatrix}
x_1\\
x_2
\end{smallmatrix} \right)=1, 
\end{equation}
where 
$$
P=\left(
\begin{array}{cc}
p_{11} & p_{12}\\
p_{12} & p_{22}
\end{array} \right)\succ 0. 
$$
For $x\neq 0$ the solution always exists and unique if 
\begin{equation}
P\succ 0, \quad PG_{\dn}+G_{\dn}P\succ 0.
\end{equation}
Hence, the homogeneous norm $\|x\|_{\dn}$ is a unique positive root of the quartic equation
\begin{equation}
V^4-p_{22}x_2^2V^2-2p_{12}x_1x_2V-x_1^2p_{11}=0
\end{equation}
which can be found using the Ferrari formulas (see Appendix). 
\hfill$\diamond$
 \end{example}
 
Therefore, for $n=2$ the canonical homogeneous norm $\|x\|_{\dn}$  can be calculated explicitly. In other cases,  a special numerical algorithm may be utilized in order to compute $\|x\|_{\dn}$ in practice, (see, e.g. \cite{Polyakov_etal2015:Aut}, \cite{Polyakov_etal2019:SIAM_JCO}, \cite{Wang_etal2020:ICRA} for more details).

\section{Homogeneous stabilization without overshoot}\label{sec:hom_stab}
This section presents a two step procedure for a homogeneous nonovershooting control design. First, inspired by \cite{KrsticBement2006:TAC} we  design a linear controller which stabilizes the system exponentially at zero without overshoot. Next, inspired by \cite[Chapter 9]{Polyakov2020:Book} we transform (``upgrade'') a linear controller to a homogeneous one such that an upper estimate of the settling time for the system with the given initial state $x_0\in \interior\Sigma_-$ can be assigned a priori.  
\subsection{Linear nonovershoting control design}
For $\lambda>0, i=1,\ldots,n-1$, let us introduce the row vectors 
\begin{eqnarray}
 h_1&=& - e_1^{\rm T}
\\ 
h_{i}&=&h_{i-1}A+\lambda h_{i-1}, \quad i=2,\ldots,n,
\end{eqnarray}
namely,
\begin{equation}\label{eq:h_i}
    h_i = -e_1^{\rm T}(A+\lambda I)^{i-1}\,,\quad i=1,\ldots,n,
\end{equation}
 and consider the linear cone
\begin{equation}\label{eq:Omega_0}
      \Omega=\{x\in \R^n: h_ix \geq 0, \  i=1,\ldots,n\}\subset\Sigma_{-}.
    \end{equation}
The following lemma is a particular case of the results obtained in \cite{KrsticBement2006:TAC}.  

\begin{lemma}\label{lem:lin_safety}
The linear feedback 
     \begin{equation}\label{eq:linear_control}
        u_{\rm lin}=Kx,  \quad    K=h_{n}(A+\lambda I) = -e_1^{\rm T}(A+\lambda I)^n, 
    \end{equation} stabilizes exponentially the system \eqref{eq:main_system} at $x=0$ and renders the set $\Omega$ 
    strictly positively invariant for the closed-loop system \eqref{eq:main_system}, \eqref{eq:linear_control}.
    Moreover, given initial value $x_0\in \interior \Sigma_-$,  choosing $\lambda>0$ such that
    \begin{equation}\label{eq:lambda} 
    \sum_{j=0}^{i-1} C^{i-1}_{j} e_{i-j}^{\top} x_0\lambda^{j}\!\leq\! 0,
    \;\; C^{i-1}_{j}\!=\!\tfrac{(i-1)!}{(i-j-1)!j!}, \;\; i\!=\!2,\ldots,n,
    \end{equation}
    guarantees that $x_0\in \Omega$. 
    
    \end{lemma}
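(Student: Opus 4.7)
The plan is to introduce new coordinates $z_i := h_i x$ for $i=1,\ldots,n$ and to show that, in these coordinates, the closed-loop system becomes a cooperative bidiagonal cascade whose entire spectrum equals $\{-\lambda\}$; each claim of the lemma then follows almost at once.

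For Step 1 (dynamics in $z$-coordinates), two algebraic facts drive the computation. The defining recursion rewrites as $h_{i-1}A = h_i - \lambda h_{i-1}$. For the integrator chain one has $A^k B = e_{n-k}$ for $k=0,\ldots,n-1$ and $A^n B = \zero$, so the binomial expansion of $(A+\lambda I)^{i-1} B$ yields $h_i B = 0$ for $i<n$ and $h_n B = -1$. Differentiating $z_i = h_i x$ along \eqref{eq:main_system}--\eqref{eq:linear_control} and substituting $u = -e_1^\top (A+\lambda I)^n x = h_{n+1}x$ (the same recursion extended one step) collapses the cascade to $\dot z_i = z_{i+1} - \lambda z_i$ for $i<n$ and $\dot z_n = -\lambda z_n$, i.e.\ $\dot z = (N - \lambda I) z$ with $N$ the nilpotent upper shift. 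Since the matrix $H$ with rows $h_1,\ldots,h_n$ is upper triangular with $\pm 1$ on its antidiagonal, the map $x \mapsto z$ is a linear bijection.

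For Step 2 (stability and invariance of $\Omega$), the spectrum of $N-\lambda I$ equals $\{-\lambda\}$, so $x \equiv \zero$ is globally exponentially stable. For invariance I exploit that $N-\lambda I$ is Metzler and that $\Omega$ is exactly the preimage of the non-negative orthant under $H$; directly, whenever $z_i = 0$ with all other $z_j \ge 0$, one has $\dot z_i = z_{i+1} \ge 0$ for $i<n$ and $\dot z_n = 0$, so trajectories never cross a face of $\Omega$ outward, which yields (strict) positive invariance. The inclusion $\Omega \subset \Sigma_-$ is immediate from $h_1 x \ge 0 \Leftrightarrow e_1^\top x \le 0$.

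For Step 3 (the $\lambda$-condition), a binomial expansion together with $e_1^\top A^k = e_{k+1}^\top$ for $k=0,\ldots,n-1$ gives $h_i x_0 = -\sum_{j=0}^{i-1} \binom{i-1}{j} \lambda^j e_{i-j}^\top x_0$, so the requirement $h_i x_0 \ge 0$ for $i \ge 2$ is precisely \eqref{eq:lambda}, while $i=1$ reduces to $-e_1^\top x_0 > 0$, already guaranteed by $x_0 \in \interior \Sigma_-$. The main obstacle is Step 1 --- verifying the two identities $h_i B = 0$ and $h_n B = -1$ and recognising that $u = h_{n+1} x$ in the extended recursion is exactly what closes $\dot z_n = -\lambda z_n$ and produces the clean cascade; once that form is in hand, exponential stability, the Metzler-based invariance, and the expansion giving \eqref{eq:lambda} are all routine linear algebra.
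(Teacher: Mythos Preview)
Your proof is correct and follows essentially the same route as the paper: pass to the coordinates $\varphi_i = h_i x$, observe that the closed-loop dynamics become $\dot\varphi = (A-\lambda I)\varphi$ with $A-\lambda I$ Metzler and Hurwitz, deduce invertibility of $H$, and recover \eqref{eq:lambda} from the binomial expansion of $-e_1^\top(A+\lambda I)^{i-1}x_0$. One small slip: $H$ is \emph{lower} triangular with $-1$'s on the main diagonal (each $h_i$ has nonzero entries only in positions $1,\ldots,i$, with $i$-th entry $-1$), not ``upper triangular with $\pm 1$ on its antidiagonal''; this does not affect your invertibility conclusion.
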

    
\begin{proof}
Inspired by \cite{KrsticBement2006:TAC} let us consider the  barrier functions $\varphi_i: \R^n \rightarrow \R$ defined as 
\begin{eqnarray}
 \varphi_1&=&-x_1,\\
 \varphi_i&=&\dot \varphi_{i-1} +\lambda \varphi_{i-1}\,,\quad i=2,\ldots,n,
\end{eqnarray}
namely,
\begin{eqnarray}
\varphi_i&=&h_{i-1}\dot x+\lambda h_{i-1}x=h_{i-1}Ax+\lambda h_{i-1}x
=h_ix
\nonumber\\
&:=& 
-e_1^{\rm T}(A+\lambda I)^{i-1} x\,, \quad i=2,\ldots,n,
\end{eqnarray} 
which is compactly represented as
\begin{eqnarray}
    \Phi &=& \left[\begin{array}{c}
    \varphi_1 \\ \vdots \\ \varphi_n
    \end{array}\right] = H x\,,
    \\ 
    H&=& \left[\begin{array}{c}
    h_1 \\ \vdots \\ h_n
    \end{array}\right] = 
    \left[\begin{array}{c}
    -e_1^{\rm T} \\ -e_1^{\rm T} (A+\lambda I) \\ \vdots \\ -e_1^{\rm T} (A+\lambda I)^{n-1}
    \end{array}\right]\,.\label{eq:H}
\end{eqnarray}
The positive orthant $\{\Phi\in \R^n: \varphi_i \geq 0, \  i=1,\ldots,n\}$ is the same as the set $\Omega$. Furthermore,  $H$ is the observability matrix of the pair $(A+\lambda I, -e_1^{\rm T})$. This pair is completely observable by direct verification, using the definition of observability (that the scalar output of this system being identically zero implies the vector state being identically zero). Hence, $H$ is invertible and $x = H^{-1} \Phi$. 
Noting that 
\begin{align}
\dot \varphi_n=&h_{n}\dot x=h_{n-1}(Ax+Bu)=-u_{ \rm lin}+h_{n}Ax
\nonumber \\
=&-\lambda h_{n}x=-\lambda \varphi_n,
\end{align}
the closed-loop system in the coordinates $\varphi_1,\ldots,\varphi_{n-1}$ has the form
\begin{equation}\label{eq:system_phi}
\left\{
\begin{array}{rcl}
\dot \varphi_1&=&-\lambda\varphi_1+\varphi_2,\\
\dot \varphi_2&=&-\lambda\varphi_2+\varphi_3,\\
&\vdots& \\
\dot\varphi_{n}&=&-\lambda\varphi_{n}\,,
\end{array}
\right.
\end{equation}
namely, $\dot \Phi = (-\lambda I + A) \Phi$. 
Since the matrix $-\lambda I + A$ is Metzler, the $\Phi$-system is positive and the set $\{\Phi\in \R^n: \varphi_i \geq 0, \  i=1,\ldots,n\}=\Omega$ is strictly positively invariant.

Let us show next that the inequalities \eqref{eq:lambda} are fulfilled for  some $\lambda>0$. 
Indeed, the inclusion $x_0\in \interior \Sigma_-$ ensures $\varphi_1(x_0)=h_1x_0>0$ and, obviously,
\begin{equation}
    \sum_{j=0}^{i-1} C^{i-1}_{j} e_{i-j}^{\top} x_0\lambda^{j}\!=\!-h_{1} x_0\lambda^{i-1}\!+\!\sum_{j=0}^{i-2} C^{i-1}_{j} e_{i-j}^{\top} x_0\lambda^{j}\!\leq\! 0
\end{equation}
for a sufficiently large $\lambda>0$. 

Let us prove now that 
$\varphi_i(x_0)\geq 0, i=1,2,\ldots,n$. 
On the one hand, for any $i\geq 2$ we derive
\begin{equation}
    \varphi_i(x_0)=h_1(A+\lambda I_n)^{i-1}x_0=h_1 \sum_{j=0}^{i-1} C^{i-1}_{j} A^{i-1-j}\lambda ^{j} x_0.
\end{equation}
On the other hand, for $j=i-1$ we  have $A^{i-1-j}=A^0=I_n$ and $h_1A^{i-1-j}=-e_1^{\top}$; for $j=i-2$ we have $A^{i-1-j}=A$ and $h_1A=-e_2^{\top}$; etc. Hence, we conclude   
\begin{equation}
    \varphi_i(x_0)\!=\!h_1\! \sum_{j=0}^{i-1} C^{i-1}_{j}\! A^{i-1-j}\lambda ^{j} x_0\!=\!-\!\sum_{j=0}^{i-1} C^{i-1}_{j} e_{i-j}^{\top}x_0\lambda ^{j}
\end{equation}
and $\varphi_i(x_0)\geq 0$ provided that the inequalities \eqref{eq:lambda} hold.
This means that $x_0\in \Omega$.

Finally, since $\lambda>0$,  the $\Phi$-system \eqref{eq:system_phi} is exponentially stable at $\Phi=0$. Given that $x(t) =H^{-1} \Phi(t)$ and $\Phi_0 = H x_0$, the  closed-loop system \eqref{eq:main_system}, \eqref{eq:linear_control}, which is governed by $\dot x = \left[A - e_n e_1^{\rm T}(A+\lambda I)^n\right]x$, has the solution $x(t) = H^{-1} {\rm e}^{(-\lambda I + A)t} H x_0$, and this system is exponentially stable at $x=0$. 
\end{proof}

The parameter $\lambda>0$ in Lemma \ref{lem:lin_safety} can be selected using a  simpler but more conservative condition.
\begin{corollary}
If $\lambda>0$ satisfies the inequalities
\begin{equation}\label{eq:lambda_2}
    \lambda \!\geq\! 1-\tfrac{e^{\top}_{k+1}x_0}{e^{\top}_1x_0}C^{i-1}_{i-k-1},  \;\; k\!=\!1,\ldots,i-1, \;\;  i\!=\!2,\ldots,n,
\end{equation}
then the inequalities \eqref{eq:lambda} hold.
\end{corollary}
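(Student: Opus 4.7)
The plan is to show that the pointwise hypothesis \eqref{eq:lambda_2} is a per-term sufficient condition which, after multiplication by appropriate powers of $\lambda$ and summation, directly implies \eqref{eq:lambda}. The key algebraic identity underlying the derivation is the telescoping sum $(\lambda-1)\sum_{j=0}^{i-2}\lambda^j=\lambda^{i-1}-1$.

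First I would reindex by setting $j:=i-k-1$, so that $k\in\{1,\ldots,i-1\}$ corresponds to $j\in\{0,\ldots,i-2\}$ with $k+1=i-j$ and $C^{i-1}_{i-k-1}=C^{i-1}_j$. Since $x_0\in\interior\Sigma_-$ gives $e_1^\top x_0<0$, multiplying \eqref{eq:lambda_2} by $e_1^\top x_0$ (which reverses the inequality) yields the equivalent family
\[
C^{i-1}_j\,e_{i-j}^\top x_0 + (\lambda-1)\,e_1^\top x_0 \leq 0,\qquad j=0,1,\ldots,i-2,
\]
one such inequality for every index $j$ in that range and every $i\in\{2,\ldots,n\}$.

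Next I would multiply the $j$-th inequality by $\lambda^j\geq 0$ and sum over $j=0,\ldots,i-2$. Using the telescoping identity above, the contribution of the second term collapses to $e_1^\top x_0(\lambda^{i-1}-1)$, producing
\[
\sum_{j=0}^{i-2} C^{i-1}_j\,e_{i-j}^\top x_0\,\lambda^j \;+\; e_1^\top x_0\,\lambda^{i-1} \;-\; e_1^\top x_0 \;\leq\; 0.
\]
Since $C^{i-1}_{i-1}=1$, the first two terms assemble into the full sum appearing in \eqref{eq:lambda}, and we arrive at $\sum_{j=0}^{i-1}C^{i-1}_j e_{i-j}^\top x_0\lambda^j\leq e_1^\top x_0<0$, which is even slightly stronger than \eqref{eq:lambda}.

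There is no genuine obstacle: the only subtleties are remembering to flip the direction of the inequality when multiplying by the negative scalar $e_1^\top x_0$, and recognizing the telescoping structure of the sum. The strict gap $e_1^\top x_0$ between the two sides in the conclusion quantifies the degree to which \eqref{eq:lambda_2} is more conservative than \eqref{eq:lambda}.
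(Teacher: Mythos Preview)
Your argument is correct. Both you and the paper begin from the same reformulation of \eqref{eq:lambda_2}: after multiplying by $e_1^\top x_0<0$ one obtains the per-term bound $C^{i-1}_{j}\,e_{i-j}^\top x_0+(\lambda-1)e_1^\top x_0\le 0$ (equivalently, $C^{i-1}_{i-1-k}\,e_{k+1}^\top x_0+\lambda e_1^\top x_0\le e_1^\top x_0$). From there, however, the routes diverge. The paper evaluates $\sum_{j=0}^{i-1}C^{i-1}_j e_{i-j}^\top x_0\lambda^j$ via Horner's nested form and peels the layers off one at a time: at each step the inner bracket is shown to be $\le e_1^\top x_0$, then multiplied by $\lambda>0$ and fed into the next inequality. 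You instead weight the $j$-th per-term inequality by $\lambda^j$ and sum directly, collapsing the $(\lambda-1)e_1^\top x_0$ contributions through the geometric identity $(\lambda-1)\sum_{j=0}^{i-2}\lambda^j=\lambda^{i-1}-1$. Your approach is a single algebraic step with no iteration, which makes the structure more transparent; the paper's Horner iteration, on the other hand, mirrors the recursive definition of the barrier functions $\varphi_i$ and so connects more visibly to the backstepping construction in Lemma~\ref{lem:lin_safety}. Both arguments land on the same sharpened conclusion $\sum_{j=0}^{i-1}C^{i-1}_j e_{i-j}^\top x_0\lambda^j\le e_1^\top x_0<0$.
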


\begin{proof}
Indeed,
for any $i\geq 2$ we derive
\begin{equation}
\begin{array}{l}
    \sum_{j=0}^{i-1} C^{i-1}_{j} e_{i-j}^{\top} x_0\lambda^{j}
    =C^{i-1}_0 e^{\top}_{i}x_0+
    \nonumber\\
    \lambda(C^{i-1}_{1} e^{\top}_{i-1}x_0\!+\!\lambda(\ldots \lambda(C^{i-1}_{i-2}e^{\top}_2x_0\!+\! \lambda C^{i-1}_{i-1}e^{\top}_1x_0))).
\end{array}
\end{equation}
Since $e_1^{\top}x_0<0$ then the inequalities \eqref{eq:lambda_2} imply that
\begin{equation}\label{eq:lambda_i}
    C^{i-1}_{i-1-k}e^{\top}_{k+1}x_0+\lambda e_1x_0\leq e_1^{\top}x_0, \quad k=1,\ldots,i-1.
\end{equation}
Hence, taking into account  $C^{i-1}_{i-1}=1$ we conclude 
\begin{eqnarray}
& C^{i-1}_{i-2}e^{\top}_2x_0+\lambda C^{i-1}_{i-1}e_1^{\top}x_0\leq e_1^{\top}x_0,
& \\
&
    C^{i-1}_{i-3}e^{\top}_3x_0+\lambda(C^{i-1}_{i-2}e^{\top}_2x_0+ \lambda C^{i-1}_{i-1}e_1^{\top}x_0)
    &\nonumber\\
    &\leq 
    C^{i-1}_{i-3}e^{\top}_3x_0+\lambda e_1^{\top}x_0\leq e_1^{\top}x_0,
& \\
&\vdots& \nonumber\\
&C^{i-1}_0 e^{\top}_ix_0+\lambda(C^{i-1}_{1} e^{\top}_{i-1}x_0+\lambda(...+\lambda(C^{i-1}_{i-2}e^{\top}_2x_0
&\nonumber \\ &
+ \lambda C^{i-1}_{i-1}e^{\top}_1x_0)))\le e_1^{\top}x_0. &
\end{eqnarray}
The proof is complete.
\end{proof}

\subsection{Upgrading linear controller to a homogeneous one}

In this subsection we design a finite-time nonovershooting control by means of an ''upgrade'' (a transformation) of a linear "nonovershooting" feedback to a homogeneous one.

\begin{theorem}\label{thm:hom_safty_control}
Let $K\in \R^{1\times n}$ be the gain of the linear controller \eqref{eq:linear_control} defined by the formula \eqref{eq:linear_control}. Then
\begin{itemize}
\item 
the system of linear matrix inequalities (LMIs)
 \begin{equation}\label{eq:LMI}
		\left\{
		\begin{array}{l}				
		P(A+BK)+(A+BK)^{\top}P\prec 0,\\
		PG_{\dn}+G_{\dn} P\succ 0, \quad P\succ 0, 
		\end{array}
		\right.
		\end{equation} 		
has a solution $P=P^{\top}\in \R^{n\times n}$;
   \item for any given $T>0$ the homogeneous controller  
   
\begin{equation}\label{eq:hom_control}
		u_{\rm h}(x)\!=\!\tilde K\dn\left(-\ln \left\|x/r\right\|_{\dn}\right)x, \quad \tilde K\!=\!K\dn(\tilde s),
		\end{equation}
		\begin{equation}\label{eq:tilde_s}
		    \tilde s\geq \ln \max\left\{\frac{1}{\rho T},1\right\}, \quad r>0
		\end{equation}
		stabilizes the system \eqref{eq:main_system} at $x=\zero$ in a finite time such that 
		\begin{equation}
		     \|x_0\|\leq r \quad \Rightarrow \quad x(t,x_0)=\zero,\quad  \forall t\ge T,
		\end{equation} where $\dn(s)$ is given by \eqref{eq:dilation},
		 \begin{equation}\label{eq:rho}
 \rho\!=\!-\lambda_{\max}\left(Q^{\frac{1}{2}}ZQ^{-\frac{1}{2}}\right)>0,
 \end{equation}
 \begin{equation}\label{eq:Z}
 Z=P(A+BK)+(A+BK)^{\top}P\prec 0,
 \end{equation}
 \begin{equation}
     Q=PG_{\dn}+G_{\dn} P\succ 0,
 \end{equation}
  the homogeneous norm $\|x\|_{\dn}$ is  induced by the weighted Euclidean norm \begin{equation}\label{eq:norm}
      \|x\|=\sqrt{x^{\top}\dn(\tilde s)P\dn(\tilde s)x}, \quad x\in \R^n;
  \end{equation}
 \item the homogeneous cone 
   \begin{equation}\label{eq:Omega}
\Omega_{r}\!=\!\left\{\!x\!\in\! \R^{n}\!: \tilde h_i\dn(-\ln\|x/r\|_{\dn})x\!\geq\! 0, i\!=\!1,...,n\!\right\}
\end{equation}
is a positively invariant set for the closed-loop system \eqref{eq:main_system}, \eqref{eq:hom_control} and 
\begin{equation}
    x_0\in \Omega \quad \Rightarrow \quad x_0\in \Omega_r\subset \Sigma_{-}
\end{equation} provided that $\|x_0/r\|_{\dn}\leq e^{\tilde s}$, where 
$\Omega\subset \Sigma_{-}$ is given by \eqref{eq:Omega_0}, $\tilde h_i=h_i\dn(\tilde s)$ and the row vectors $h_i$  are defined by \eqref{eq:h_i}. 
\end{itemize}
\end{theorem}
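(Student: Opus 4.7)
The plan is to handle the three bullets of the theorem in order, with the invariance of $\Omega_r$ requiring the deepest structural input.

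For the solvability of the LMIs \eqref{eq:LMI}, I would start from Lemma \ref{lem:lin_safety}, which makes $A+BK$ Hurwitz, so that the Lyapunov inequality has a whole cone of positive-definite solutions. The dilation-monotonicity condition $PG_{\dn}+G_{\dn}P\succ 0$ is then obtained by the construction in \cite{Polyakov_etal2015:Aut, Polyakov_etal2018:TAC}: working in the $\Phi$-coordinates of Lemma \ref{lem:lin_safety} (where the closed loop becomes $-\lambda I + A$ with $A$ the nilpotent shift), one picks a Lyapunov matrix for the shifted-shift system and pulls it back by $H$; since $G_{\dn}\succ 0$, the monotonicity inequality is preserved for a suitable choice.

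For finite-time stabilization, I would take $V(x)=\|x/r\|_{\dn}$ as a Lyapunov function. Using \eqref{eq:d_hom_norm}, the substitution $y=\dn(-\ln V)(x/r)$ onto the unit sphere, and the homogeneity identities $\dn(-s)A\dn(s)=e^{-s}A$ and $\dn(s)B=e^{s}B$, the derivative $\dot V$ along \eqref{eq:main_system}, \eqref{eq:hom_control} collapses to
\[
\dot V \;=\; e^{\tilde s}\,\frac{z^{\top}Zz}{z^{\top}Qz}, \qquad z=\dn(\tilde s)y,
\]
where the factor $e^{\tilde s}$ arises from pulling the scaling in $\tilde K=K\dn(\tilde s)$ through. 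Bounding this Rayleigh-like quotient in view of \eqref{eq:rho}--\eqref{eq:Z} gives $\dot V\le -\rho e^{\tilde s}$, a \emph{constant} negative rate. Direct integration yields reaching the origin in time at most $V(x_0)/(\rho e^{\tilde s})\le 1/(\rho e^{\tilde s})$, which is $\le T$ by \eqref{eq:tilde_s} whenever $\|x_0\|\le r$.

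The invariance of $\Omega_r$ is the crux. I would pass to $\zeta:=\dn(\tilde s-\ln\|x/r\|_{\dn})(x/r)$, under which $\Omega_r$ becomes the linear cone $\{\zeta:h_i\zeta\ge 0,\ i=1,\dots,n\}=\Omega$. Differentiating $\zeta$ and applying the same homogeneity identities yields
\[
\dot\zeta \;=\; V^{-1}e^{\tilde s}(A+BK)\zeta-(\dot V/V)\,G_{\dn}\zeta.
\]
The key algebraic step, which I expect to be the main obstacle, is the identity
\[
h_iG_{\dn}\;=\;(n-i+1)\,h_i+(i-1)\lambda\,h_{i-1},\quad i=1,\dots,n,
\]
proved by expanding $h_i=-e_1^{\top}(A+\lambda I)^{i-1}$ via the binomial theorem and using $k\binom{i-1}{k}=(i-1)\binom{i-2}{k-1}$ together with $e_{k+1}^{\top}=e_1^{\top}A^k$. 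This makes $HG_{\dn}H^{-1}$ lower bidiagonal with nonnegative entries. Combined with $h_iA=h_{i+1}-\lambda h_i$, $h_iB=0$ for $i<n$, and $h_n(A+BK)=-\lambda h_n$, evaluating $\dot g_i:=h_i\dot\zeta$ on the boundary $\{g_i=0,\ g_j\ge 0\}$ gives
\[
\dot g_i \;=\; V^{-1}e^{\tilde s}\,g_{i+1}+(-\dot V/V)(i-1)\lambda\,g_{i-1}\;\ge\;0
\]
(using $\dot V\le 0$ from the previous step), i.e., the Nagumo inward condition on each facet.

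Finally, $\Omega_r\subset\Sigma_-$ is immediate from $h_1\dn(s)x=-e^{ns}x_1$. For $\Omega\subset\Omega_r$ under $\|x_0/r\|_{\dn}\le e^{\tilde s}$, I would set $\psi_i(s):=h_i\dn(s)x_0$; the same bidiagonal identity shows that $\psi(s)=(\psi_1,\dots,\psi_n)^{\top}$ satisfies $d\psi/ds=M\psi$ with $M$ Metzler, so $\psi(0)\ge 0$ (equivalent to $x_0\in\Omega$) propagates forward to $s=\tilde s-\ln\|x_0/r\|_{\dn}\ge 0$, which is exactly the condition defining $\Omega_r$.
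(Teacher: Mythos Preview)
Your proposal is correct and follows essentially the same route as the paper: the Lyapunov computation for $\dot V\le -\rho e^{\tilde s}$ is identical, your Nagumo facet-by-facet argument in Part 3 is exactly the component-wise reading of the paper's Metzler/positive-system formulation $\dot\phi=\frac{e^{\tilde s}}{\|x/r\|_{\dn}}\Pi\phi$, and your $d\psi/ds=M\psi$ argument for $\Omega\subset\Omega_r$ is the ODE version of the paper's Corollary~\ref{cor:q_i}(i). The one place you are thinner than the paper is Part 1: the phrase ``since $G_{\dn}\succ 0$, the monotonicity inequality is preserved for a suitable choice'' is not the actual mechanism---after the pullback by $H$ the generator becomes $HG_{\dn}H^{-1}=G_{\dn}+\lambda(nI_n-G_{\dn})A^{\top}$ (precisely the identity you derive in Part 3), and the paper then handles both transformed LMIs simultaneously by taking $\tilde P=\mathrm{diag}\{p_1,\dots,p_n\}$ and building the $p_i$ recursively via Schur complements; you should replace the $G_{\dn}\succ 0$ remark by this construction.
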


\begin{proof}
1)  Let $P=H^{\top}\tilde P H$, where $H$ is defined by \eqref{eq:H} and $\tilde P=\tilde P^{\top}\in \R^{n\times n}$ is a symmetric matrix to be defined below. By Corollary  \ref{cor:q_i} we have $H(A+BK)=(-\lambda I_n+A)H$ and  $HG_{\dn}=\left(G_{\dn}+\lambda(nI_n-G_{\dn})A^{\top}\right) H$.
Since 
\begin{align}
    &P(A+BK)+(A+BK)^{\top}P
    \nonumber \\
    & =H^{\top}\tilde PH(A+BK)+(A+BK)^{\top}H^{\top}\tilde P H
\nonumber \\ &=
    H^{\top}\left(\tilde P(A-\lambda I_n)+(A-\lambda I_n)^{\top}\right)H
\end{align}
and
\begin{align}
    &PG_{\dn}+G_{\dn} P
    =H^{\top}\tilde PHG_{\dn}+G_{\dn}H^{\top}\tilde PH
    \nonumber \\
    & =
H^{\top}\left(\tilde P\left(G_{\dn}+\lambda(nI_n-G_{\dn})A^{\top}\right) \right.
\nonumber \\ & \left. +
\left(G_{\dn}+\lambda(nI_n-G_{\dn})A^{\top}\right)^{\top}\tilde P\right)H    
\end{align}
then, the system of LMIs \eqref{eq:LMI} becomes
\begin{equation}
\left\{
\begin{array}{l}
\tilde P(A-\lambda I_n)+(A-\lambda I_n)^{\top}\tilde P \prec 0\\
\tilde PM +M^{\top}\tilde P\succ 0, \\
\tilde P\succ 0, 
\end{array}
\right.
\end{equation}
where 
\begin{equation}
    M=G_{\dn}+\lambda(nI_n-G_{\dn})A^{\top}.
\end{equation}
If $\tilde P=\mathrm{diag} \{p_1,\ldots,p_n\}$ with $p_i>0$ then $\tilde P\succ 0$ and  the latter system of LMIs becomes
\begin{equation}\label{eq:LMI1_temp}
    \left(
    \begin{smallmatrix}
     2\lambda p_1 & -p_2 & 0 & 0 &...& 0 & 0\\
     -p_2 & 2\lambda p_2 & -p_3 & 0 & ... & 0 & 0\\
     0 & -p_3 & 2\lambda p_3 & -p_4 & ... & 0 & 0\\
     ... & ... & ... & ... & ... & ... & ... \\
     0   & 0 & 0 & 0 & ...& 2\lambda p_{n-1} & -p_n\\ 0   & 0 & 0 & 0 & ...&  -2p_n & \lambda p_n
    \end{smallmatrix}
    \right)\succ 0,
\end{equation}
\begin{equation}\label{eq:LMI2_temp}
   \left(
    \begin{smallmatrix}
     2n p_1  & \lambda p_2 & 0 & 0 &...& 0 & 0\\
     \lambda p_2 & 2(n\!-\!1)p_2  & 2\lambda p_3 & 0 & ... & 0 & 0\\
     0 & 2\lambda p_3 & 2(n\!-\!2) p_3 & 3\lambda p_4 & ... & 0 & 0\\
     ... & ... & ... & ... & ... & ... & ... \\
     0   & 0 & 0 & 0 & ...& 4p_{n-1} & (n\!-\!1)\lambda p_n\\ 0   & 0 & 0 & 0 & ...&  (n\!-\!1)\lambda p_n & 2p_n
    \end{smallmatrix}
    \right)\succ 0.
\end{equation}
Let us fix an arbitrary $p_n>0$, e.g., $p_n=1$. Since $\lambda>0$ then there exists a sufficiently large $p_{n-1}>0$ such that 
\begin{equation}
\left(
    \begin{array}{ccccccc}
     2\lambda p_{n-1} & -p_n\\  -p_n & 2\lambda p_n
    \end{array}
    \right)\succ 0 ,
\end{equation}
\begin{equation}
    \left(
    \begin{array}{ccccccc}
    4p_{n-1} & (n\!-\!1)\lambda p_n\\   (n\!-\!1)\lambda p_n & 2p_n
    \end{array}
    \right)\succ 0.
\end{equation}
simultaneously. Hence, using Schur complement we conclude that there exists a sufficiently large $p_{n-2}>0$ such that
the inequalities 
\begin{equation}
 \left(\!
    \begin{array}{ccccccc}
     2\lambda p_{n-2}&  p_{n-1} & 0\\  
     p_{n-1}& 2\lambda p_{n-1} & -p_n\\  
     0& -p_n & 2\lambda p_n
    \end{array}\!
    \right)\!\succ\! 0, 
\end{equation}    
\begin{equation}
    \left(\!
    \begin{array}{ccccccc}
     6p_{n-2} & \!(n\!-\!2)\lambda p_{n-1}\! & 0\\  
    \!(n\!-\!2)\lambda p_{n-1}\! & 4p_{n-1} & (n\!-\!1)\lambda p_n\\  
    0 & \!(n\!-\!1)\lambda p_n\! & 2p_n
    \end{array}\! \right)\!\succ\!0,
\end{equation}
are fulfilled simultaneously.
Repeating the same consideration subsequently for $p_{n-3}$,\ldots,$p_{1}$ we conclude that the system of LMIs \eqref{eq:LMI} is always feasible.

2) 
Let us show that the homogeneous norm $\|\cdot\|_{\dn}$ is a Lyapunov function of the closed-loop system \eqref{eq:main_system}, \eqref{eq:hom_control}. 
Indeed, using the formula \eqref{eq:d_hom_norm} and  the identities $\dn(s)A=e^{s}\dn(s)A, \dn(s)B=e^{s}B, \forall s\in \R$ we derive
\begin{align}
&\frac{d\|x/r\|_{\dn}}{dt}
\nonumber\\
&= \tfrac{\|x/r\|_{\dn} (x/r)^{\top}\dn(-\ln \|x/r\|_{\dn})\dn(\tilde s)P\dn(\tilde s)\dn(-\ln \|x/r\|_{\dn})(\dot x/r)}{(x/r)^{\top}\dn(-\ln \|x/r\|_{\dn})\dn(\tilde s)P\dn(\tilde s)G_{\dn}
\dn(-\ln \|x/r\|_{\dn})(x/r)}
\nonumber\\
 &=e^{\tilde s}\tfrac{x^{\top}\dn(-\ln \|x/r\|_{\dn})\dn(\tilde s)P(A+BK)\dn(\tilde s) \dn(-\ln \|x/r\|_{\dn})x}{x^{\top}\dn(-\ln \|x/r\|_{\dn})\dn(\tilde s)PG_{\dn}\dn(\tilde s)
\dn(-\ln \|x/r\|_{\dn})x}
\nonumber\\
&\leq -\rho e^{\tilde s}\leq -\max\left\{\frac{1}{T}, \rho \right\}\leq -1/T.\label{eq:dotV}
\end{align}
where the linear matrix inequality \eqref{eq:LMI} and the identity \eqref{eq:rho}
are
utilized on the last step.
The latter means that the closed-loop system \eqref{eq:main_system}, \eqref{eq:hom_control} is globally uniformly finite-time stable and
\begin{equation}
\|x(t,x_0)/r\|_{\dn}\leq \|x_0/r\|_{\dn} -\frac{t}{T}
\end{equation}
as long as $x(t,x_0)\neq \zero$. If $\|x\|\le r$ then $\|x/r\|_{\dn}\le 1$ and $x(t,x_0)=\zero$ for all $t\geq T$.

3) Let us consider the vector-valued homogeneous barrier functions $\phi : \R^n \to \R^n$ defined as follows $\phi(x)=H\dn(\tilde s)\dn(-\ln \|x/r\|_{\dn})x$. We have
\begin{equation}
\begin{split}
 \dot \phi\!=&-H\dn(\tilde s)\frac{\frac{d \|x/r\|_{\dn}}{d t}}{\|x/r\|_{\dn}} G_{\dn}\dn(-\ln \|x/r\|_{\dn})x\\
 &\!+\!H\dn(\tilde s)\dn\!\left(\!-\!\ln \left\|\frac{x}{r}\right\|_{\dn}\right)\!\left(\!Ax\!+\!B\tilde K \dn\!\left(\!-\!\ln \left\|\frac{x}{r}\right\|_{\dn}\right)\!x\!\right)\!.
 \end{split}
\end{equation}
 Denoting $\gamma= -e^{-\tilde s} \frac{d \|x/r\|_{\dn}}{d t}$ and using the identities $\dn(s)A=e^{s}\dn(s)A, \dn(s)B=e^{s}B,  \forall s\in \R$   we derive
 \begin{equation}
      \dot \phi=\frac{e^{\tilde s}}{\|x/r\|_{\dn}} H(A+BK+\gamma G_{\dn})\dn(\tilde s) \dn(-\ln\|x\|_{\dn})x.
 \end{equation}
Using  Corollary \ref{cor:q_i} we obtain
 \begin{equation}\label{eq:dot_phi}
     \dot \phi=\frac{e^{\tilde s}}{\|x/r\|_{\dn}}\Pi\phi,
 \end{equation}
 where
$\Pi\!=-\lambda I_n+A+\gamma G_{\dn}+
     \gamma \lambda (nI_n-G_{\dn})A^{\top}.$
Since $\gamma>0$ and the diagonal matrix $nI_n-G_{\dn}$ is non-negative then 
$\Pi$ is a three diagonal Metzler matrix and the system \eqref{eq:dot_phi} is positive. Therefore, the homogeneous cone $\Omega_r$ is a strictly positively invariant set  of the the closed-loop system \eqref{eq:main_system}, \eqref{eq:hom_control} and, by construction, $\Omega_r\subset \Sigma_-$.

Finally, since $\tilde s\geq 0$ then  
$\|\dn(-\tilde s)x_0/r\|_{\dn}\leq e^{-\tilde s}\|x_0/r\|_{\dn}\leq 1$ provided that $\|x_0/r\|_{\dn}\leq e^{\tilde s}$.
If   
$x_0\in \Omega$ then $h_ix_0\geq 0, i=1,\ldots,n$.
Taking into account the representation 
$
    \phi_i(x_0)=h_i e^{-D_i \ln \|\dn(-\tilde s)x_0\|_{\dn}}x_0
$
 we derive $\phi_i(x_0)\geq 0$ due to Corollary \ref{cor:q_i}, i.e., $x_0\in \Omega_r$.
\end{proof}

The feedback \eqref{eq:hom_control} is discontinuous at zero and smooth away from the origin \cite{Polyakov_etal2015:Aut}, so the solution of the system for the time instances greater than the setting time is defined in the sense of Filippov \cite{Filippov1988:Book}. 

\begin{remark}[Fixed-time nonovershooting control] \em
To guarantee fixed-time convergence of any trajectory of the system \eqref{eq:main_system}, \eqref{eq:hom_control} to zero, it is sufficient to select $r$ in \eqref{eq:hom_control} in a manner dependent on the initial state, for example, as $r=\|x_0\|$ for $x_0\neq \zero$. 
\end{remark}

Theorem \ref{thm:hom_safty_control} presents a scheme for a transformation ("upgrade") of a linear nonovershooting control a homogeneous one. The homogeneous controller guarantees a finite-time stabilization of the system state  at zero without overshoot in the first component.   The main advantage of the  finite-time controller \eqref{eq:hom_control} is the simplicity of the parameters tuning. The parameter $r$ in \eqref{eq:hom_control} defines the maximum norm  of the initial state $x_0$ for which the settling time of the system is bounded by the arbitrary fixed number $T>0$. 
\begin{remark}\em
The controller \eqref{eq:hom_control} is globally uniformly bounded as follows
\begin{align}\label{eq:u_estimate}
|u_{\rm h}(x)|^2=& x^{\top}\dn(-\ln \|x/r\|_{\dn}) \tilde K^{\top} \tilde K    \dn(-\ln \|x/r\|_{\dn})x
\nonumber \\
\leq & r^2\lambda_{\max}\left(P^{-\frac{1}{2}}  K^{\top}  K  P^{-\frac{1}{2}}\right) \text{ for all }x\in \R^n,
\end{align}
where the identity 
$$
x^{\top}\dn(-\ln \|x/r\|_{\dn})\dn(\tilde s)P  \dn(\tilde s)\dn(-\ln \|x/r\|_{\dn})x=r^2
$$ is utilized on the last step. 
\end{remark}

The positive cone\footnote{A set $\mathcal{D}\subset \R^n$ is said to be a positive cone if $x\in \mathcal{D} \Rightarrow e^sx\in \mathcal{D}, \forall s\in \R$.} $\Omega\subset\Sigma_-$ defines a set of initial states of the system \eqref{eq:main_system} for which the linear control \eqref{eq:linear_control} stabilizes the system without overshoot in the first coordinate. In the case of the homogeneous  nonovershooting control \eqref{eq:hom_control} such a positively invariant set is
the $\dn$-homogeneous cone\footnote{A set $\mathcal{D}\subset \R^n$ is said to be a $\dn$-homogeneous cone if $x\in \mathcal{D} \Rightarrow \dn(s)x\in \mathcal{D}, \forall s\in \R$.} $\Omega_r\subset\Sigma_-$. 
Since $|u_{\rm lin}(x)|=|u_{\rm h}(x)|$ for $\|x/r\|_{\dn}=e^{\tilde s}$ then
linear and homogeneous controllers have  the  same maximum magnitude on the homogeneous ball  
\begin{equation}
  B_r=\{x\in \R^n: \|x/r\|_{\dn}\leq e^{\tilde s}\}.\label{eq:hom_ball}
\end{equation}
Theorem \ref{thm:hom_safty_control} implies that  
the positively invariant compact set $\Omega_r \cap\ B_r$ of the homogeneous control system is larger than the positively invariant set $\Omega\cap B_r$ of the linear control system despite that both controllers have the same magnitude on $B_r$.

\begin{remark}\em The simple combination of the linear and homogeneous feedbacks
\begin{equation}\label{eq:lin_hom_control}
 u=\left\{ 
 \begin{array}{ccc}
 u_{\rm h} & \text{ if } & \|x/r\|_{\dn}\leq e^{\tilde s},\\
 Kx & \text{ if } & \|x/r\|_{\dn}> e^{\tilde s}
 \end{array}
 \right.
\end{equation} 
gives a nonovershooting control $u\in C(\R^n\backslash\{\zero\})$ as well. The closed-loop system \eqref{eq:main_system}, \eqref{eq:lin_hom_control} is globally finite-time stable with the   positively invariant set  $\Omega\cup \Omega_r$. Indeed, let us consider the non-smooth  Lyapunov function
\begin{equation}
 V=\left\{ 
 \begin{array}{ccc}
 \|x/r\|_{\dn} & \text{ if } & \|x/r\|_{\dn}\leq e^{\tilde s},\\
 \|\dn(-\tilde s)x/r\| & \text{ if } & \|x/r\|_{\dn}> e^{\tilde s}.
 \end{array}
 \right.
\end{equation}
 By construction, $V$ is continuous on $\R^n$, locally Lipschitz continuous on $\R^n\backslash\{\zero\}$,  continuously differentiable on the set $\R^{n}\backslash\{x\in \R^n:\|x/r\|_{\dn}=e^{\tilde s}\}\backslash\{\zero\}$ and
\begin{equation}
    \dot V\leq\left\{
    \begin{array}{ccc}
-\frac{1}{T} & \text{ if } & 0<V< e^{\tilde s},\\
 -\tilde \rho V & \text{ if } & V> e^{\tilde s},
 \end{array}
 \right.
\end{equation}
where $\tilde \rho=-\lambda_{\max} (P^{-1/2}ZP^{-1/2})>0$. For $V=e^{\tilde s}$ using the chain rule for the Clarke's gradient we derive $\dot V\leq -\min\{1/T, \tilde \rho e^{\tilde s}\}<0$.
The continuous barrier functions $\tilde \phi_i,i=1,\ldots,n$ for this system
\begin{equation}
 \tilde \phi_i=\left\{ 
 \begin{array}{ccc}
\phi_i & \text{ if } & V\leq e^{\tilde s},\\
 \varphi_i & \text{ if } & V> e^{\tilde s}.
 \end{array}
 \right.
\end{equation}
combine smooth linear and homogeneous barrier functions $\varphi$ and $\phi$, respectively.
Since $\dot V<0$ then  the vector $\tilde \phi=(\tilde \phi_1, \ldots,\tilde \phi_n)^{\top}$ satisfies the switched positive system  
\begin{equation}
\frac{d\tilde \phi}{d t} =
\left\{
\begin{array}{ccc}
\frac{e^{\tilde s}}{\|x/r\|_{\dn}} \Pi \tilde \phi & \text{if} & V\leq e^{\tilde s},\\
 (-\lambda I_n \!+\!A) \tilde \phi & \text{if} & V> e^{\tilde s},
 \end{array}
\right.
\end{equation}
having a unique  switching isolated in time.  
\end{remark}

The differential equations \eqref{eq:system_phi} and \eqref{eq:dot_phi} for linear  and homogeneous barrier functions are slightly different. In the linear case, the equation has the constant Metzler matrix $-\lambda I_n +A$, while the tridiagonal Metzler matrix $\Pi$ in the homogeneous case  depends on the time derivative of the Lyapunov function $\|x/r\|_{\dn}$. This dependence complicates the analysis and design of the homogeneous safety filters. 

Despite the obvious difference of positively invariant sets of linear and homogeneous controllers, their mathematical descriptions in polar homogeneous coordinates are identical provided that we do not care about tuning of the settling time of the homogeneous system, namely, if $\tilde s=0$ and $r=1$. Indeed, 
the polar homogeneous coordinates \cite{Praly1997:CDC} in $\R^n$ are given by the polar radius $\rho=\|x\|_{\dn}$  and the unit vector $y=\dn(-\ln \|x\|_{\dn})x$, which defines the so-called homogeneous projection on the unit sphere. In the linear case we deal with the standard dilation $\tilde \dn(s):=e^{I_n s}$, so the polar radius is just the Euclidean norm  $\tilde \rho:=\|x\|_{\tilde \dn}=\|x\|$  and the homogeneous projection  is
$\tilde y:=\tilde \dn(-\ln\|x\|_{\tilde \dn})x=x/\|x\|$. 
Hence, in the polar coordinates we have $\Omega=\{(\tilde \rho,\tilde y) : h_i\tilde y\geq 0,\forall i\}$ and 
$\Omega_r=\{(\rho,y) : h_i y\geq 0, \forall i\}$ provided that $r=1$ and $\tilde s=0$.

The barrier functions as well as the structure of the system \eqref{eq:dot_phi} depend on the particular control law. Some interesting homogeneous control algorithms for the integrator chain are introduced in
 \cite{CoronPraly1991:SCL, BhatBernstein2005:MCSS, Andrieu_etal2008:SIAM_JCO}. However,  a design of barrier functions and a selection of control parameters allowing a nonovershooting finite-time/fixed-time stabilization based on these  algorithms  are unclear yet.

\section{Homogeneous Safety Filters}\label{sec:hom_safe}

A nonovershooting controller can be implemented in practice as the so-called `safety filter'
\cite{abel2022prescribedtime}:
\begin{equation}\label{eq:safety_filter}
    u=\min(u_{\rm nom},u_{\rm lin})
\end{equation}
where $u_{\rm nom}$ is a nominal controller which may have overshoots and $u_{\rm lin}$ is a non-overshooting linear controller whose  gain can be time-invariant as in Lemma \ref{lem:lin_safety} or time-varying as in \cite{abel2022prescribedtime}. Such an implementation  allows overshoots of the nominal controller $u_{\rm nom}$ to be eliminated.  For $n\leq 2$ the same safety filter is applicable in the case of the homogeneous nonovershooting controller \eqref{eq:hom_control}.
For $n\geq 3$ we show that the overshoots of the nominal controller can be eliminated by the homogeneous controller using a modified filter.

Let us define $\Delta_r=+\infty$ for $n\leq 2$ and 
\begin{equation}
\Delta_r(x)= \frac{\gamma_r}{\gamma_u}+\min\limits_{i=2,\ldots,n-1} 
 \tfrac{c_i\phi_i+\phi_{i+1}}{\lambda (i-1)\gamma_u \phi_{i-1}}, 
\end{equation}
\begin{equation}
    \gamma_r=
\tfrac{\phi^{\top}\tilde P (\lambda I_n -A) \phi}
{\phi^{\top}\tilde PHG_{\dn}H^{-1}\phi}>0, \quad \forall x\in \R^n
\end{equation}
\begin{equation}
\gamma_u=
\tfrac{\phi^{\top}\tilde P B}
{\phi^{\top}\tilde PHG_{\dn}H^{-1}\phi}\geq  0, \quad  \forall x\in \Omega_{r}
\end{equation}
where $n\geq 3$, $x\in \interior \,\Omega_r$, $c_i>0$ are arbitrary constants,
$\phi_i=\tilde h_i \dn(-\ln \|x/r\|_{\dn})x$, $i=1,...,n$ are the homogeneous barrier functions
and $\tilde P\succ 0$ is a diagonal matrix defined in the proof of Theorem \ref{thm:hom_safty_control}.
The positive function $\Delta_r$ may have infinite values on the boundary of the set $\Omega_r$:
\begin{equation}
\Delta_r(x):=\liminf_{z\to x} \Delta_r(z)\in (0,+\infty], 
\; z\in \Omega_r, x\in \partial \Omega_r.
\end{equation} 
Finally, let us define $\Delta_r(x)=+\infty$ for $x\in \R^n\backslash \Omega_r$.

\begin{theorem}[Finite-time Safety Filter]\label{thm:FnTSf}
Let  $u_{\rm nom}: [0,+\infty)\mapsto \R$ be a continuous nominal control signal and   $u_{\rm h}:\R^n\mapsto \R$ be a nonovershooting homogeneous feedback defined by Theorem \ref{thm:hom_safty_control} with $P=H^{\top}\tilde P H$, $\tilde P=\mathrm{diag}\{p_1,...,p_n\}, p_n=1$.

Let the safety filter be defined  on $\R^{n}\backslash\{\zero \}$
by the formula
 \begin{equation}\label{eq:safety_filter2}
     u=\max\{u_{\rm h}\!-\!\Delta_r,\, \min\{u_{\rm nom},u_{\rm h}\}\}.
 \end{equation}
 \begin{itemize}
 \item The $\dn$-homogeneous cone $\Omega_r$ is a strictly positively invariant set of 
 the closed-loop system \eqref{eq:main_system}, \eqref{eq:safety_filter2}.
   \item If $x(\tau,x_0)\!=\!\zero$ then 
   \begin{eqnarray}
       u_{\rm nom}(t)\!<\!0,  \forall t\!\in\! (\tau,\tau\!+\!\varepsilon]\;\Rightarrow&-x(t,x_0)\!\in\! \R^{n}_+,  \label{eq:imp_1}\\
       u_{\rm nom}(t)\!\geq\! 0, \forall t\!\in\! [\tau,\tau\!+\!\varepsilon] \;\Rightarrow\!&x(t,x_0)\!=\! \zero,\label{eq:imp_2}
   \end{eqnarray}
   where $\varepsilon>0$ is a positive number. 
    \item For any values of time $\tau>0$ for which
 \begin{equation}   \label{eq-nomh1}
     u_{\rm nom}(t)\geq u_{\rm h}(x(t,x_0)), \;\; \forall t\!\in\! [\tau,\tau+T_0]:x(t,x_0)
     \neq \zero, 
 \end{equation}
  it holds that
 \begin{equation}   \label{eq-nomh2}
     x(\tau+T_0,x_0)=\zero,
 \end{equation}
  where $T_0=\|x(\tau,x_0)/r\|_{\dn}T$.
 \end{itemize}
\end{theorem}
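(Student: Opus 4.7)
The plan is to extend the barrier-function analysis of Theorem~\ref{thm:hom_safty_control} part 3) to accommodate the filtered control. Setting $\delta u := u - u_{\rm h}(x)$, inspection of \eqref{eq:safety_filter2} shows $\delta u \in [-\Delta_r(x),0]$ for every $x\in\R^n\setminus\{\zero\}$, with $\delta u=0$ precisely when $u_{\rm nom}\ge u_{\rm h}$. Because $HB=-e_n$ (a consequence of the observability of $(A+\lambda I,-e_1^{\top})$ used in the proof of Lemma~\ref{lem:lin_safety}), the perturbation enters only the last barrier, and repeating the calculation that led to \eqref{eq:dot_phi} with the filtered $u$ yields
\[
\dot\phi = \tfrac{e^{\tilde s}}{\|x/r\|_{\dn}}\,[\,\Pi\phi - e_n\delta u\,],
\]
with the same tridiagonal $\Pi$ as in Theorem~\ref{thm:hom_safty_control} except that the scalar $\gamma$ appearing in $\Pi$ now obeys $\gamma=\gamma_r+\gamma_u\delta u$; this identity is obtained by rewriting $d\|x/r\|_{\dn}/dt$ in $\phi$-coordinates via $P=H^{\top}\tilde PH$.

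Positive invariance of $\Omega_r$ is then checked face-by-face using Nagumo's condition. For $i=1$, $\dot\phi_1|_{\phi_1=0}=\tfrac{e^{\tilde s}}{\|x/r\|_{\dn}}\phi_2\ge 0$ automatically. For $i=n$, the identity $\gamma_u=p_n\phi_n/(\phi^{\top}\tilde PHG_{\dn}H^{-1}\phi)$ gives $\gamma_u|_{\phi_n=0}=0$ and hence $\gamma=\gamma_r>0$ on this face, so $\dot\phi_n|_{\phi_n=0}\propto\gamma_r\lambda(n-1)\phi_{n-1}-\delta u\ge 0$ since $\delta u\le 0$. For $2\le i\le n-1$ the required $\dot\phi_i|_{\phi_i=0}\ge 0$ reduces to $\gamma\ge -\phi_{i+1}/(\lambda(i-1)\phi_{i-1})$; this is exactly what $\Delta_r$ delivers because, as $\phi_i\to 0$, the slack $c_i\phi_i$ vanishes and the $i$-th term of the minimum collapses onto $\phi_{i+1}/(\gamma_u\lambda(i-1)\phi_{i-1})$, so the filter bound $\delta u\ge -\Delta_r$ forces $\gamma_r+\gamma_u\delta u\ge -\phi_{i+1}/(\lambda(i-1)\phi_{i-1})$ in the limit. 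The $\liminf$ extension of $\Delta_r$ to $\partial\Omega_r$ handles multi-face intersections via the tangent cone of $\Omega_r$.

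For the behaviour at $x(\tau,x_0)=\zero$, the boundedness \eqref{eq:u_estimate} of $u_{\rm h}$ combined with a Filippov interpretation produces the two cases. When $u_{\rm nom}(t)<0$ on $(\tau,\tau+\varepsilon]$, the filter returns $u(t)<0$; the Cauchy formula for the integrator chain starting from $\zero$ yields $x_i(t)=\int_\tau^t\frac{(t-s)^{n-i}}{(n-i)!}u(s)\,ds<0$ for each $i$, so $-x(t,x_0)\in\R^n_+$. When $u_{\rm nom}(t)\ge 0$ throughout $[\tau,\tau+\varepsilon]$, the tangent cone of $\Omega_r$ at $\zero$ equals $\Omega_r$ itself (since $\Omega_r$ is a closed $\dn$-homogeneous cone), and the only Filippov selection of the filter with $Bu$ tangent to $\Omega_r$ is $u=0$, whence $x(t,x_0)=\zero$ is preserved. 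The final implication is the quickest: under \eqref{eq-nomh1} the filter returns $u=u_{\rm h}$ for all $t\in[\tau,\tau+T_0]$, the dynamics reduce to those of Theorem~\ref{thm:hom_safty_control}, and \eqref{eq:dotV} yields $\|x(t,x_0)/r\|_{\dn}\le\|x(\tau,x_0)/r\|_{\dn}-(t-\tau)/T$, which reaches zero exactly at $t=\tau+T_0$.

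The principal technical obstacle is the face-wise Nagumo analysis for $2\le i\le n-1$: in the interior $\Delta_r>\gamma_r/\gamma_u$, so under the filter $\gamma$ is allowed to turn negative, and one must verify that the $\min_i$ construction together with the vanishing slacks $c_i\phi_i$ forces the inward-pointing condition at each face in the appropriate limit. A secondary difficulty is making sense of $\Delta_r$ on multi-face intersections of $\partial\Omega_r$ where the denominators $\phi_{i-1}$ vanish; the $\liminf$ extension in the definition of $\Delta_r$ is engineered precisely for this purpose.
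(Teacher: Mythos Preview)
Your plan follows the same architecture as the paper's proof: derive the $\phi$-dynamics under the filtered input, use the design of $\Delta_r$ to keep the barrier system positive, handle the origin via Filippov, and invoke \eqref{eq:dotV} for the finite-time claim. The identification $HB=-e_n$ and the decomposition $\gamma=\gamma_r+\gamma_u\delta u$ are exactly what the paper uses. Three points where the paper's execution is tighter than your sketch:

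\textbf{Positive invariance.} You frame it as face-by-face Nagumo with limits $\phi_i\to 0$, worrying about multi-face intersections. The paper avoids all of this by writing $\dot\phi_i=-a_i\phi_i+b_i$ with $b_i=\phi_{i+1}+c_i\phi_i+\lambda(i-1)\gamma_f\phi_{i-1}$, and showing from $u_{\rm h}-\Delta_r\le u\le u_{\rm h}$ that $b_i\ge 0$ \emph{throughout} $\Omega_r$, not merely on the face $\{\phi_i=0\}$. That is precisely why the slack $c_i\phi_i$ sits inside the minimum in $\Delta_r$: it upgrades the face inequality to a global one, so no limiting argument or tangent-cone bookkeeping is needed. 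Your inequality $\gamma\ge -\phi_{i+1}/(\lambda(i-1)\phi_{i-1})$ at $\phi_i=0$ is a special case of the global bound $\gamma\lambda(i-1)\phi_{i-1}+c_i\phi_i+\phi_{i+1}\ge 0$.

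\textbf{Origin, case $u_{\rm nom}\ge 0$.} The tangent-cone/viability argument is heavier than required. The paper simply observes that the Filippov set of admissible $\dot x_n$ at $x=\zero$ is $[u_{\min},u_{\max}]$ with $u_{\min}=\min\{u_{\rm nom},\liminf u_{\rm h}\}<0$ and $u_{\max}=\min\{u_{\rm nom},\limsup u_{\rm h}\}\ge 0$; since $0\in[u_{\min},u_{\max}]$, the constant solution $x\equiv\zero$ is a Filippov solution (a sliding mode). Your claim that ``the only Filippov selection with $Bu$ tangent to $\Omega_r$ is $u=0$'' is not the operative mechanism.

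\textbf{Finite-time claim.} Under \eqref{eq-nomh1} the Lyapunov decrease gives $\|x(t,x_0)/r\|_{\dn}=0$ at some $t^*\le\tau+T_0$, possibly strictly before $\tau+T_0$. You then need to argue the trajectory \emph{stays} at $\zero$ on $[t^*,\tau+T_0]$; the paper closes this by invoking the origin analysis just established (noting $u_{\rm h}>0$ on the open negative orthant, so if the system were to leave $\zero$ the hypothesis $u_{\rm nom}\ge u_{\rm h}$ together with \eqref{eq:imp_1}--\eqref{eq:imp_2} forces it back). Your sentence ``reaches zero exactly at $t=\tau+T_0$'' skips this step.
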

\begin{proof} 1) Repeating the derivation of the formula 
\eqref{eq:dot_phi} in the proof of Theorem \ref{thm:hom_safty_control} we obtain
\begin{equation}\label{eq:phi_safety}
    \dot \phi=e^{\tilde s}\tfrac{(-\lambda I_n +A+\gamma_f HG_{\dn}H^{-1})\phi +B(u_{\rm h}-u)}{\|x/r\|_{\dn}}, 
\end{equation}
as long as $x(t,x_0)\neq \zero$,
where $\gamma_f=-e^{-\tilde s}\frac{d}{dt} \|x/r\|_{\dn}$ and 
$HG_{\dn}H^{-1}\!=\!G_{\dn}\!+\!\lambda (nI_n-G_{\dn})A^{\top}$ by  Corollary \ref{cor:q_i}.

Using the formula \eqref{eq:d_hom_norm} and  the identities $H(A+BK)=(-\lambda I_n+A)H$, $\dn(s)A=e^{s}\dn(s)A, \dn(s)B=e^{s}B, \forall s\in \R$ we derive 
\begin{equation}
\gamma_f=\gamma_r-\gamma_u (u_{\rm h}-u).
\end{equation}
Notice that $\phi^{\top}\tilde PB=\phi_{n}$.
The equation \eqref{eq:phi_safety} can be rewritten in a component-wise form as follows
\begin{equation}
\begin{split}
 \dot \phi_1=&-(\lambda-n\gamma_f)\phi_1+\phi_2,\\
 \dot \phi_i=&-a_i\phi_i+b_i, \quad i=2,\ldots,n-1,\\
 \dot \phi_n=&\lambda \gamma_r(n\!-\!1) \phi_{n-1}\!-\!\left(\lambda \!-\!\gamma_f \!+\!\tfrac{\phi_{n-1}(u_{\rm h} -u)}
{\phi^{\top}\tilde PHG_{\dn}H^{-1}\phi}\right)\phi_n\!+\!u_{\rm h}\!-\!u,\\
 \end{split}
\end{equation}
where $a_i=\lambda -(n-i+1)\gamma_f +c_i $ and
$b_i=\phi_{i+1}+c_i\phi_i+ \lambda (i-1)(\gamma_r-\gamma_u (u_{\rm h}-u))\phi_{i-1}$.

For the the safety filter \eqref{eq:safety_filter2} we have 
\begin{equation}
u_{\rm h}-\Delta \leq u \leq u_{\rm h}, \forall x\in \Omega_r\backslash\{\zero\}
\end{equation}
Hence,  $b_i\geq 0$ as long as $x\in \Omega_r\backslash\{\zero\}$. Consequently, the system \eqref{eq:phi_safety} is positive and any trajectory $x(t,x_0)$ of the closed-loop system 
\eqref{eq:main_system}, \eqref{eq:safety_filter2} with $x_0\in \Omega_r$ belongs to $\Omega_r$ as long as $x(t,x_0)\neq \zero$. 

2) Since $u_{\rm h}$ is discontinuous at $x=\zero\in \Omega_r$ then the origin may be a sliding set of the closed-loop system. In this case, solutions  are defined in sense of Filippov \cite{Filippov1988:Book} and 
\begin{equation}
    \dot x_n\in [u_{\min},u_{\max}] \text{ for } x=\zero, 
\end{equation}
where, due to  $\lim_{z\to \zero} \Delta_r(z)=+\infty$, we have 
\begin{equation}
 u_{\min}= \min\left\{u_{\rm nom},\liminf_{z\to \zero} u_{\rm h}(z)\right\} 
\end{equation}
\begin{equation}
 u_{\max}=  \min\left\{u_{\rm nom},\limsup_{z\to \zero}u_{\rm h}(z)\right\}   
\end{equation}
If $u_{\rm nom}\geq 0$ then $u_{\min}<0\leq u_{\max}$ and the system has a sliding mode at  $x=\zero$ as long as $u_{\max}\geq 0$.
The inequality  $u_{\rm nom}<0$ implies $u_{\min}\leq u_{\max}<0$, so the system leaves the origin, enters the set  $\{x\in \R^n: x_i<0,i=1,...,n\}\subset \interior \, \Omega_r$ and stays there, at least, till $u_{\rm nom}<0$.
 Therefore, the implications \eqref{eq:imp_1}, \eqref{eq:imp_2} hold and $\Omega_r$ is a positively invariant set of the system \eqref{eq:main_system} with the safety filter \eqref{eq:safety_filter2}.

 3) If $u_{\rm nom}(t)\geq u_{\rm h}(x(t,x_0))$ for all $t\in [\tau,\tau+T_0]: x(t,x_0)\neq 
\zero$ then $u(t)=u_{\rm h}(t)$ for all $t\in [\tau,\tau+T_0]: x(t,x_0)\neq 
\zero$. Using  the inequality \eqref{eq:dotV} we derive 
\begin{equation}
\|x(t,x_0)/r\|_{\dn}\leq \|x(\tau,x_0)/r\|_{\dn} -\frac{t-\tau}{T}
\end{equation}
as long as $x(t,x_0)\neq 
\zero$. This means  that there exists $t^*\in [\tau,\tau+T_0]$ such that  
$\|x(t^*,x_0)/r\|_{\dn}=0$. Since $u_{\rm h}(x)>0$ for $x\in \{x\in \R^n: x_i<0,i=1,...,n\}$ then the implications \eqref{eq:imp_1}, \eqref{eq:imp_2}
and the assumption $u_{\rm nom}(t)\geq u_{\rm h}(x(t,x_0))$ for all $t\!\in\! [\tau,\tau+T_0]:x(t,x_0)
     \neq \zero$ guarantee $x(t,x_0)=\zero$ for all $t\in [t^*,\tau+T_0]$.
\end{proof}

\begin{remark}\em 
A safety filter with
the mixed nonovershooting controller \eqref{eq:lin_hom_control}
can also be realized by the formula \eqref{eq:safety_filter2} selecting $\Delta(x)=+\infty$ for $\|x/r\|_{\dn}>e^{\tilde s}$.
\end{remark}  

The key result of this theorem is \eqref{eq-nomh1},    \eqref{eq-nomh2}, which guarantees that the safety boundary is reached by the ``release time'' $T_0$. Note that $\tau$ is not necessarily a single time but it may represent multiple times. Likewise, $\tau + T_0$ may represent multiple times. Depending on the nominal control, the safety filter may kick in and out intermittently and, if the safety override persists, the safety boundary will always be reached in some finite time $T_0$, which is not the same at each of the recurrent occasions of the boundary being reached. 

Theorem \ref{thm:FnTSf} proves that the safety-filter \eqref{eq:safety_filter} keeps the system within the set $\Omega_r\subset \Sigma_-$ for all time but, in the case where the user's nominal control $u_{\rm nom}$ commands operation in the unsafe set, allows the system to reach the boundary of the safe set $\Sigma_-$ no later than a finite-time $T_0>0$. The time $T_0=\|x(\tau,x_0)/r\|_{\dn}T$ is unknown a-priori. It depends implicitly on the user's nominal control $u_{\rm nom}$ and the initial state $x_0\in \Omega_r$. To specify the settling time  independently of $u_{\rm nom}$ and $x_0$, we slightly modify the homogeneous controller \eqref{eq:hom_control} making the parameter $r$ dependent of the system trajectory.

 \begin{theorem}[Fixed-time Safety Filter]\label{thm:FxTSf}
 If 
 \begin{equation}\label{eq:r_t}
  r=r(t):= \max\left\{r_{\min}, \max_{\tau \in [0,t]} \|\dn(-\tilde s) x(\tau)\|\right\}, \quad r_{\min}>0
 \end{equation}
 then under conditions of Theorem \ref{thm:FnTSf} the following holds.
 
 \begin{itemize}
     \item 
     If $x_0\in \Omega$ then 
     $x(t,x_0)\subset \Theta_{r(t)}\subset\Sigma_-$, $\forall t\geq 0$, where 
     \begin{equation}
         \Theta_r=\Omega_r\cap B_r
     \end{equation}
      and $\Omega_r$, $B_r$ are given by \eqref{eq:Omega}, \eqref{eq:hom_ball}, respectively .
      \item If $x(\tau,x_0)\!=\!\zero$ then the implications \eqref{eq:imp_1}, \eqref{eq:imp_2} hold.
    \item 
     For any values of time $\tau\!>\!0$ for which $x(\tau,x_0)\!\in\! \Theta_{r(\tau)}$ and
 \begin{equation}
     u_{\rm nom}(t)\geq u_{\rm h}(x(t,x_0)), \;\; \forall t\!\in\! [\tau,\tau+T]: x(t,x_0)\neq \zero,
 \end{equation}
 it holds that $x(\tau +T,x_0)=\zero,$
 where $T>0$ is a parameter of the homogeneous control \eqref{eq:hom_control}.

 \end{itemize}
 \end{theorem}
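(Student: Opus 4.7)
The plan is to lift the arguments of Theorems \ref{thm:hom_safty_control} and \ref{thm:FnTSf} to the time-varying $r(t)$, exploiting that \eqref{eq:r_t} makes $r$ non-decreasing and continuous while forcing $\|\dn(-\tilde s)x(t)\|\leq r(t)$, equivalently $\|x(t)/r(t)\|_{\dn}\leq e^{\tilde s}$, so $x(t)\in B_{r(t)}$ for free. This uniform trap of the Lyapunov function $V(x,r)=\|x/r\|_{\dn}$ replaces the state-dependent restraint time of Theorem \ref{thm:FnTSf} by a state-independent one.

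For the $\Omega_{r(t)}$-part of $\Theta_{r(t)}$, I would re-examine the barrier vector $\phi(t)=H\dn(\tilde s)\dn(-\ln\|x(t)/r(t)\|_{\dn})x(t)$ from the proof of Theorem \ref{thm:FnTSf} and take its total time derivative, now differentiating through $r$ as well. Using the implicit formula \eqref{eq:d_hom_norm} one checks that $\partial_r V\leq 0$ (monotonicity of the dilation), while $\dot r\geq 0$ by construction, so the extra term $\partial_r V\cdot\dot r$ in $\dot V$ only augments the Metzler-producing quantity $\gamma_f$ appearing in \eqref{eq:phi_safety}: the matrix $\Pi$ governing $\dot\phi$ remains three-diagonal Metzler and, combined with $u\in[u_{\rm h}-\Delta_r,u_{\rm h}]$ keeping the driving term of \eqref{eq:phi_safety} non-negative in $\Omega_{r(t)}$, the positive-system argument of Theorem \ref{thm:FnTSf} yields $\phi_i(t)\geq 0$, i.e., $x(t)\in\Omega_{r(t)}$. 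The implications \eqref{eq:imp_1}--\eqref{eq:imp_2} at $\zero$ follow verbatim from the Filippov analysis used there, since $r(t)\geq r_{\min}>0$ and the structure of $u_{\rm h}$ near the origin are unchanged by the time dependence of $r$.

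For the fixed-time claim, the hypothesis $u_{\rm nom}(t)\geq u_{\rm h}(x(t))$ selects $u=u_{\rm h}$ on $[\tau,\tau+T]$ in the filter \eqref{eq:safety_filter2}, reducing the closed loop to \eqref{eq:main_system},\eqref{eq:hom_control}. The Lyapunov bound \eqref{eq:dotV} together with $\partial_r V\cdot\dot r\leq 0$ gives $\dot V\leq -1/T$. Moreover, under $u=u_{\rm h}$ the state contracts monotonically in the homogeneous sense, so the running maximum defining $r(t)$ is not refreshed past $\tau$ and $r(t)\equiv r(\tau)$ on $[\tau,\tau+T]$. The situation then reduces to Theorem \ref{thm:hom_safty_control} applied to the initial state $x(\tau)\in B_{r(\tau)}$, which delivers $x(\tau+T)=\zero$.

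The main obstacle I anticipate is rigorously connecting the decay of the homogeneous Lyapunov function $V=\|x/r\|_{\dn}$ to the decay of the Euclidean-type quantity $\|\dn(-\tilde s)x\|$ that actually defines $r(t)$, since these two quantities involve different norms and the dilation is not a simple scaling; this link is precisely what is needed to conclude that $r(t)$ freezes on $[\tau,\tau+T]$. A secondary difficulty is the non-smoothness of $r(t)$ at instants when its running maximum is updated: the Lyapunov and positive-system differential inequalities must then be interpreted in the Dini (or Clarke) sense almost everywhere, and the dilation-monotonicity condition $PG_{\dn}+G_{\dn}P\succ 0$ has to be invoked to ensure that $\partial_r V\leq 0$ holds uniformly on the invariant set.
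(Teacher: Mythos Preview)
Your proposal is correct and mirrors the paper's proof: the paper likewise derives \eqref{eq:phi_safety} with $\gamma_f$ augmented by the nonnegative term $\dot r\,\phi^{\top}\tilde P\phi\big/\bigl(r^{2}\phi^{\top}\tilde P HG_{\dn}H^{-1}\phi\bigr)$ (using that $r$ is continuous, non-decreasing, hence a.e.\ differentiable with $\dot r\geq 0$), re-runs the Metzler/positivity argument of Theorem~\ref{thm:FnTSf} to get $x(t)\in\Omega_{r(t)}$, repeats the Filippov analysis at $\zero$ verbatim, and for the fixed-time claim shows by contradiction that $r(t)\equiv r(\tau)$ on $[\tau,\tau+T]$ so that the estimate \eqref{eq:dotV} with frozen $r$ yields $x(\tau+T)=\zero$. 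Your ``main obstacle'' dissolves via the equivalence $\|x/r\|_{\dn}\leq e^{\tilde s}\Leftrightarrow \|\dn(-\tilde s)x/r\|\leq 1$, which follows from $\|\dn(s)y\|_{\dn}=e^{s}\|y\|_{\dn}$ together with $\|y\|_{\dn}\leq 1\Leftrightarrow\|y\|\leq 1$ (Section~\ref{sec:hom}); this is exactly the link the paper uses to pass from the decay of $\|x/r(\tau)\|_{\dn}$ to the non-refreshing of the running maximum defining $r$.
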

 \begin{proof}
  Notice that by definition of the monotone function $r$ we have $\|\dn(-\tilde s)x(t,x_0)/r(t)\|\leq 1$, $\forall t\geq 0$.
  
  1) Since $\Omega_r\subset \Sigma_-$ for any $r\in (0,+\infty)$ (see, Theorem \ref{thm:hom_safty_control}) then $\Theta_r\subset \Sigma_-$ for any $r\in (0,+\infty)$. 

Let us show that $r_1\leq r_2 \Rightarrow \Theta_{r_1}\subset \Theta_{r_2}$.
Indeed, if $x\in \Omega_{r_1}$  and $r_1\leq r_2$ 
then $h_i\dn(s) \dn(-\ln \|x/r_1\|_{\dn}) x=h_i \dn(-\ln \|\dn(-\tilde s)x/r_1\|_{\dn}) x\geq 0$, $i=1,\ldots,n$ and $\|\dn(-\tilde s) x/r_2\|_{\dn}\leq \|\dn(-\tilde s)x/r_1\|_{\dn}\leq 1$. Hence,
using Corollary \ref{cor:q_i} we derive
\begin{equation}
\begin{split}
&h_i \dn(\tilde s) \dn(-\ln \|x/r_2\|_{\dn}) x=\\
&h_i \dn\left(\ln \frac{\|\dn(-\tilde s)x/r_1\|_{\dn}}{\|\dn(-\tilde s)x/r_2\|_{\dn}}\right) \dn(s)\dn(-\ln \|x/r_1\|_{\dn})x\geq 0.
\end{split}
\end{equation}
The system \eqref{eq:main_system}, \eqref{eq:safety_filter2}, \eqref{eq:r_t} is a functional differential equation. The existence of its solution $x(t,x_0)$ for any $x_0\in \R^n$ follows, for example, from \cite[Theorem 4.1]{KolmanovskiMyshkis1992:Book}. Notice also $x(t,x_0)\in \Omega_{r(t)} \Rightarrow \Theta_{r(t)}$ since
$\|x(t,x_0)/r(t)\|\leq e^{\tilde s}$, $\forall t\geq 0$. Moreover,  $x_0\in  \Theta_{r(0)}$ provided that $x_0\in \Omega$.

Let us show that $x(t,x_0)\in\Theta_{r(t)}$ for all $t\geq 0$.
First, we derive the equation \eqref{eq:phi_safety} with
$\gamma=-e^{-\tilde s} \frac{d}{dt} \|x/r\|_{\dn}$ and 
$\phi_i(t_0)\geq0, i=1,
\ldots,n$.
Next, using the formula \eqref{eq:d_hom_norm} and the identities $H(A+BK)=(-\lambda I_n+A)H, \dn(s)A=e^{s}\dn(s)A, \dn(s)B=e^{s}B, \forall s\in \R$ we obtain
\begin{equation}
\gamma=\gamma_r-\gamma_u(u_{\rm h}-u)+\tfrac{\dot r
\phi^{\top}\tilde P\phi}
{r^2\phi^{\top}\tilde PHG_{\dn}H^{-1}\phi}.
\end{equation}
 Since $t\mapsto r(t)$ is a continuous non-decreasing function then 
$r$ is differentiable almost everywhere  and  $\dot r\geq 0$. 

2) Repeating considerations of the proof of Theorem \ref{thm:FnTSf} we conclude $x_0\in \Omega \Rightarrow x(t,x_0)\in \Omega_{r(t)}$ for all $t\geq 0$, and the implications \eqref{eq:imp_1}, \eqref{eq:imp_2} hold.

  3)  If $0\leq t_1<t_2$ are such that
 $
     u_{\rm nom}(t)\geq u_{\rm h}(x(t,x_0))$ and $x(t,x_0)\neq \zero$ for all $\forall t\!\in\! [t_1,t_2] $ then repeating the derivation of  \eqref{eq:dotV} we obtain
  \begin{equation}
     \left.\frac{d}{dt} \|x(t,x_0)/r(t_1)\|_{\dn}\right|_{t=t_1}\leq -\frac{1}{T}
 \end{equation}
 this means that the function $t\to e^{-\tilde s}\|x(t,x_0)/r(t_1)\|_{\dn}=\|\dn(-\tilde s)x(t,x_0)/r(t_1)\|_{\dn}$ is strictly decreasing at $t=t_1$
  and  
 $ r(t)=r(t_1), \forall t\in [t_1,t_2]
 $.
 Indeed, otherwise  there exists $t'\in(t_1,t_2]$ such that $r(t)=r(t_1)$ for all $t\in [t_1,t']$ and $\|\dn(-\tilde s)x(t',x_0)/r(t_1)\|\geq  1$, but this is impossible due to 
 \begin{equation}
     \frac{d}{dt} \|x(t,x_0)/r(t_1)\|_{\dn}\leq -\frac{1}{T}, \;\; \forall t\in[t_1,t']. 
 \end{equation}
 
Hence, repeating the proof of Theorem \ref{thm:FnTSf} we derive $x(\tau+T,x_0)/r(\tau)=\zero$ provided that $u_{\rm nom}(t)\geq u_{\rm h}(x(t,x_0))$ for all $t\!\in\! [\tau,\tau+T]$ such that $x(t,x_0)\neq \zero$.
 \end{proof}
 
 The settling time estimate $T>0$ of the safety filter \eqref{eq:safety_filter2}, \eqref{eq:hom_control}, \eqref{eq:r_t} is independent of the nominal control $u_{\rm nom}$
 and the initial state $x_0\in \R^n$. Such a 'fixed-time safety' is guaranteed by means of an adaptation of the parameter $r$ of the homogeneous controller \eqref{eq:hom_control} (see, the formula \eqref{eq:r_t}). The parameter 
 $r$ depends implicitly on the user's nominal controller $u_{\rm nom}$ and on the initial state $x_0\in \Omega $ of the system. However, this parameter also specifies the maximum magnitude of the homogeneous controller (see, the formula \eqref{eq:u_estimate}) as well as a positively invariant set  of the closed-loop system (see, the formula \eqref{eq:Omega}). In practice, the maximum value of $r$ should be bounded by some $r_{\max}>0$ due to
physical restrictions to admissible control signals. In this case, the  
 safety filtering with the fixed settling time estimate $T>0$ may be  ensured only in the zone $\Theta_{r_{\max}}\subset \Sigma_-$.
 
 If the safety override for a duration of time $T$ recurs, on each occasion the safety boundary will be reached after $T$ time units. The adaptive parameter $r$ is to be different on each of those occasions. 

\section{Example: Double integrator}\label{sec:ex}
\subsection{Linear nonovershooting controller}
Let $n=2$. According to Lemma \ref{lem:lin_safety}, given $x_0\in \interior \Sigma_-\subset \R^2$ a linear nonovershooting  controller can be  defined as follows
\begin{equation}
   u_{\rm lin}(x)=Kx, \quad  K=h_2A+\lambda h_2=(-\lambda^2\;\;\;-2\lambda)
\end{equation}
where
\begin{equation}
   \begin{split}
        h_1=(- 1\;\; 0), \quad h_2=(-\lambda \;\; -1),  \\ \lambda\geq 1-\frac{h_1Ax_0}{h_1x_0}=1-\frac{e_2^{\top}x_0}{e_1^{\top}x_0}. 
   \end{split}
\end{equation}
Indeed, if $\varphi_1=-x_1$ and $
\varphi_2=-\lambda x_1-x_2$, where $x=(x_1,x_2)^{\top}\in \R^2$, then
\begin{equation}\label{eq:dot_varphi_1}
\dot \varphi_1=-\lambda \varphi_1+\varphi_2,
\end{equation}
and
\begin{equation}\label{eq:dot_varphi_2}
    \dot \varphi_2=-\lambda x_2-u_{\rm lin}=-\lambda^2 x_1-\lambda x_2=-\lambda \varphi_2. 
\end{equation}
The system \eqref{eq:dot_varphi_1}, \eqref{eq:dot_varphi_2} is globally asymptotically stable and positive.  Taking into account $\lambda>-\frac{e_2^{\top}x_0}{e_1^{\top}x_0}$
we derive $\phi_1(0)>0$ and $\phi_2(0)>0$, so the controller stabilizes the state vector $x$ at zero without overshoot in the first component: 
\begin{equation}
    e^{\top}_1x(t,x_0)\leq 0, \quad  \forall t\geq 0.
\end{equation}
The set 
\begin{equation}
    \Omega=\{x\in \R^2: h_1x\geq 0, h_2x\geq 0\}
\end{equation} is positively invariant for the closed-loop linear system.
\subsection{Upgrading a linear to a homogeneous controller}
1) \textit{Homogeneous stabilization.} Since for $n=2$ one has $G_{\dn}=\left(\begin{smallmatrix} 2 & 0\\ 0 & 1 \end{smallmatrix}\right)$ then
\begin{align}
  \left(\begin{array}{c} h_1\\h_2\end{array}\right)G_{\dn}=&  \left(\begin{array}{c} h_1\\h_2\end{array}\right)+ \left(\begin{array}{c} h_1\\h_2\end{array}\right)\left(\begin{array}{cc} 1 & 0\\ 0 & 0 \end{array}\right)
\nonumber \\
=&\left(\begin{array}{cc}2 & 0\\ \lambda & 1\end{array}\right) \left(\begin{array}{c} h_1\\h_2\end{array}\right)
\end{align}
and
\begin{align}
    \left(\begin{array}{c} h_1\\h_2\end{array}\right)(A+BK)=&\left(\begin{array}{cc} 0 & -1\\ \lambda^2 & \lambda \end{array}\right)
    \nonumber \\=&\left(\begin{array}{cc} -\lambda & 1\\ 0 & -\lambda \end{array}\right)\left(\begin{array}{c} h_1\\h_2\end{array}\right).
\end{align}
Hence, for  $\alpha>0$ and 
\begin{equation}
P=\left(\begin{array}{c} h_1\\h_2\end{array}\right)^{\top}\left(
\begin{array}{cc}
 \alpha & 0\\
 0 & 1
 \end{array}
\right)\left(\begin{array}{c} h_1\\h_2\end{array}\right),
\end{equation}
the system of LMIs \eqref{eq:LMI} becomes
\begin{align}
    \left(\!\begin{array}{c} h_1\\h_2\end{array}\!\right)^{\top}\!\left(\!
\begin{array}{cc}
 -2\lambda \alpha & 1\\
 1 & -2\lambda
 \end{array}
\!\right)\left(\!\begin{array}{c} h_1\\h_2\end{array}\!\right)\prec & 0,\nonumber\\
    \left(\!\begin{array}{c} h_1\\h_2\end{array}\!\right)^{\top}\!\left(\!
\begin{array}{cc}
 4\alpha & \lambda\\
 \lambda & 2
 \end{array}
\!\right)\left(\!\begin{array}{c} h_1\\h_2\end{array}\!\right)\succ & 0,
\end{align}
or, equivalently,
\begin{equation}
\alpha>\max\left\{\frac{1}{4\lambda^2}, \frac{\lambda^2}{8}\right\}. 
\end{equation}

Selecting $\tilde s=1$, by Theorem \ref{thm:hom_safty_control}, the controller 
\begin{eqnarray}
    u_{\rm h}(x)&=&K\left(\begin{array}{cc} \frac{1}{\|x/r\|_{\dn}^2} & 0\\ 0 & \frac{1}{\|x/r\|_{\dn}} \end{array}\right)x
    \nonumber\\
    &=& - \left(\frac{\lambda}{\|x/r\|_{\dn}}\right)^2 x_1 - 2 \frac{\lambda}{\|x/r\|_{\dn}} x_2
\end{eqnarray}
stabilizes the double integrator to zero in a finite-time and 
\begin{equation}\label{eq:1/rho}
    \|x_0\|\leq r \quad \Rightarrow \quad x(t,x_0)=\zero, \quad t\geq T:=\frac{1}{\rho},
\end{equation} where $r>0$ is an arbitrary parameter,
the homogeneous norm $\|x\|_{\dn}$ is induced by the norm $\|x\|=\sqrt{x^{\top}Px}$ and 
\begin{equation}\label{eq:rho_n=2}
    \rho=-\lambda_{\rm max}\left(\left(\!
\begin{smallmatrix}
 4\alpha & \lambda\\
 \lambda & 2
 \end{smallmatrix}
\!\right)^{-\frac{1}{2}}\left(\!
\begin{smallmatrix}
 -2\lambda \alpha & 1\\
 1 & -2\lambda
 \end{smallmatrix}
\!\right)\left(\!
\begin{smallmatrix}
 4\alpha & \lambda\\
 \lambda & 2
 \end{smallmatrix}
\!\right)^{-\frac{1}{2}}\right)
\end{equation}\
The fixed-time stabilization can be guaranteed by selection $r=\max\{r_{\min}, \|x_0\|\}$, where $r_{\min}>0$.

2) \textit{Homogeneous barrier functions}. The time-derivatives of the functions
\begin{equation}
    \phi_1=-\tfrac{1}{\|x/\gamma\|_{\dn}^2}x_1, \quad \phi_2=-\tfrac{\lambda}{\|x/r\|_{\dn}^2}x_1-\tfrac{1}{\|x/r\|_{\dn}}x_2 
\end{equation}
along the trajectories of the closed-loop homogeneous system 
are
\begin{align}
\dot \phi_1&=\tfrac{2\frac{d}{dt} \|x/r\|_{\dn}}{\|x/r\|_{\dn}^3}x_1-\tfrac{x_2}{\|x/r\|_{\dn}^2}\nonumber
\\
&=-\tfrac{\lambda+2\frac{d}{dt} \|x/r\|_{\dn}}{\|x/r\|_{\dn}}\phi_1+\tfrac{1}{\|x/r\|_{\dn}}\phi_2,\\
\dot \phi_2&=\tfrac{2\lambda \frac{d}{dt} \|x/r\|_{\dn}}{\|x/r\|_{\dn}^3}x_1+\tfrac{-\lambda+\frac{d}{dt}\|x/r\|_{\dn}}{\|x/r\|_{\dn}^2}x_2-\tfrac{1}{\|x/r\|_{\dn}}u
    \nonumber \\
    &=\tfrac{2\lambda \left(-\frac{d}{dt} \|x/r\|_{\dn}\right)}{\|x/r\|_{\dn}}\phi_1\!+\!\tfrac{\lambda^2}{\|x/r\|_{\dn}^3}x_1\!+\!\tfrac{\lambda+\frac{d}{dt} \|x/r\|_{\dn}}{\|x/r\|_{\dn}^2}x_2\nonumber \\
    &=\tfrac{\lambda \left(-\frac{d}{dt} \|x/r\|_{\dn}\right)}{\|x/r\|_{\dn}}\phi_1-\tfrac{\lambda+\frac{d}{dt} \|x/r\|_{\dn}}{\|x/r\|_{\dn}}\phi_2.
\end{align}
Since $\gamma=\left(-\frac{d}{dt} \|x/r\|_{\dn}\right)>0$ and $\|x/r\|_{\dn}>0$ as long as $x\neq \zero$ then the system
\begin{equation}
    \left(
     \begin{smallmatrix}
    \dot \phi_1\\
    \dot \phi_2
    \end{smallmatrix}\right)=
    \tfrac{1}{\|x/r\|_{\dn}}
    \left(
    \begin{smallmatrix}
    -\lambda+2\gamma & 1\\
    \lambda \gamma & -\lambda +\gamma
    \end{smallmatrix}
    \right)
     \left(
     \begin{smallmatrix}
     \phi_1\\
     \phi_2
    \end{smallmatrix}\right)
\end{equation}
is positive and
the set 
 \begin{equation}
 \Omega_r=\left\{x\in \R^2:
 \phi_1(x)\geq 0, \phi_2(x) \geq 0 \right\}.
 \end{equation}
 is a strictly positively invariant set of the closed loop homogeneous system. 
 
3) \textit{Tuning of the settling time}. Let us  optimize the settling time estimate \eqref{eq:1/rho} maximizing $\rho$.
From \eqref{eq:rho_n=2} we conclude that $\rho=-\max\{\eta_1,\eta_2\}$, where
$\eta_1$ and $\eta_2$ are real negative roots of the algebraic equation
\begin{equation}
    \det\left(\left(\!
\begin{array}{cc}
 -2\lambda \alpha & 1\\
 1 & -2\lambda
 \end{array}
\!\right)-\eta \left(\!
\begin{array}{cc}
 4\alpha & \lambda\\
 \lambda & 2
 \end{array}
\!\right)\right)=0,
\end{equation}
which can be rewritten as follows
\begin{equation}
(8\alpha-\lambda^2)\eta^2+2(3+4\alpha)\lambda \eta+4\lambda^2-1=0.  
\end{equation}
Since 
\begin{equation}
D=(3+4\alpha)^2\lambda^2-(8\alpha -\lambda^2)(4\lambda^2-1)>0 \text{ for } \alpha>\tfrac{\lambda^2}{8}
\end{equation}
then 
\begin{equation}
\rho=\frac{(3+4\alpha)\lambda-\sqrt{(3+4\alpha)^2\lambda^2-(8\alpha -\lambda^2)(4\lambda^2-1)}}{(8\alpha-\lambda^2)},
\end{equation}
or, equivalently,
\begin{equation}
    \rho=\frac{4\lambda^2-1}{(3+4\alpha)\lambda+\sqrt{(3+4\alpha)^2\lambda^2-(8\alpha -\lambda^2)(4\lambda^2-1)}}
\end{equation}
Hence, to maximize $\rho$  the parameter $\alpha$ should be minimized. For $\lambda\geq \sqrt[4]{2}$ we have $\alpha>\max\left\{\frac{1}{4\lambda^2},\frac{\lambda^2}{8}\right\}=\frac{\lambda^2}{8}$ and the smallest upper estimate of the settling time $x(t,x_0)=0,\forall t\geq T^*$ corresponds to $\alpha^*=\frac{\lambda^2}{8}$ and
\begin{equation}
T^*=\frac{1}{\rho^*}=\frac{6+\lambda^2}{4\lambda^2-1}\lambda.
\end{equation}
By Theorem \ref{thm:hom_safty_control}, any settling time estimate $T>0$ can be assigned  by means of the modification of the feedback gain 
$\tilde K=K\dn(\tilde s), \tilde s\geq \ln \max\{1/(\rho T),1\}$.
Without such a re-scaling of $K$, the upper estimate of the setting time cannot be less than $T^*$, since the parameter $\alpha>0$ must satisfy the inequality $\alpha>\frac{\lambda^2}{8}$ to guarantee the feasibility of LMIs \eqref{eq:LMI}.

\subsection{Numerical simulations and the use of the homogeneous controller as a ``safety filter''}

For  $x_0=(-4,2)$ we  select 
$$
\lambda=2\geq 1-\frac{e^{\top}_2x_0}{e_1^{\top}x_0}  \quad \text{ and } \quad K=(-4 \;\; -4).
$$
Then
\begin{equation}
h_1=(-1\;\; 0), \quad h_2=(-2\;\; -1), 
\end{equation}
Selecting $\alpha=\frac{\lambda^2+0.01}{8}=0.50125$  we derive $\rho\approx 0.7495$, so the closed-loop system with the homogeneous nonovershooting controller \eqref{eq:hom_control} has the following settling time estimat: $T=\frac{1}{\rho}\approx 1.3342$, i.e., $\|x_0\|\leq r \Rightarrow x(t,x_0)=\zero,\forall t\geq T$.

A nonovershooting controller for the chain of integrators can assume the role of a ``safety filter'' of a nominal controller $u_{\rm nom}$ by being implemented as \cite{abel2022prescribedtime}
\begin{equation}\label{eq-min-filter}
    u=\min(u_{\rm nom},u_{\rm s})
\end{equation}
where the nominal control $u_{\rm nom}$ may demand overshoots, while the override of such an ``unsafe'' nominal control is performed using a ``safe'' control $u_{\rm s}$, which may be either of the linear kind, denoted as $u_{\rm s}=u_{\rm lin}$, or of the homogeneous kind, denoted as $u_{\rm s}=u_{\rm h}$). 
\begin{figure}[t]\centering
\includegraphics[width=90mm]{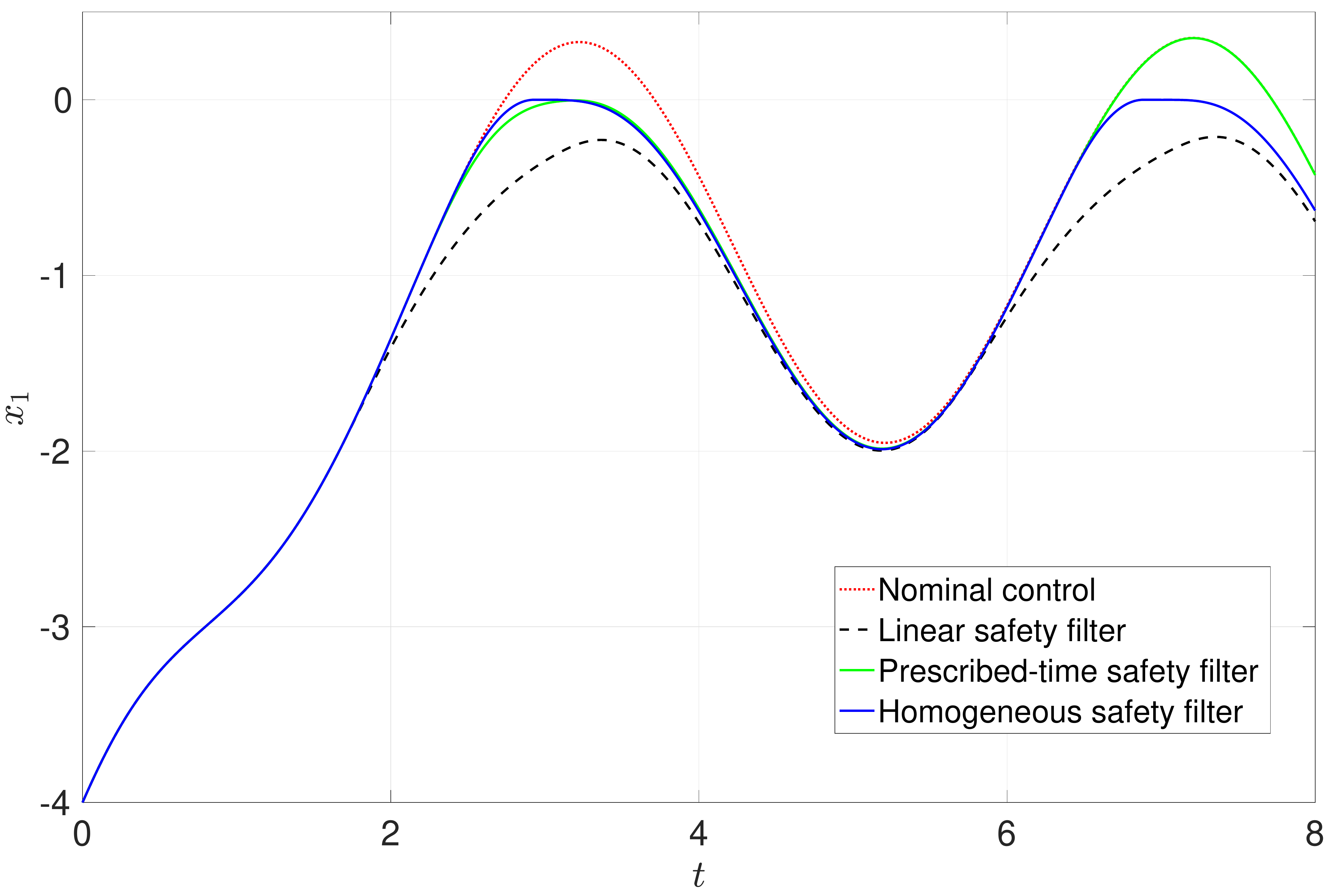}
\caption{Trajectories in $(t,x_1)$-plane for $x_0=(-4\;\; 2)^{\top}$.}
\label{fig:tx_1}
\end{figure}
 \begin{figure}[t]\centering
\includegraphics[width=90mm]{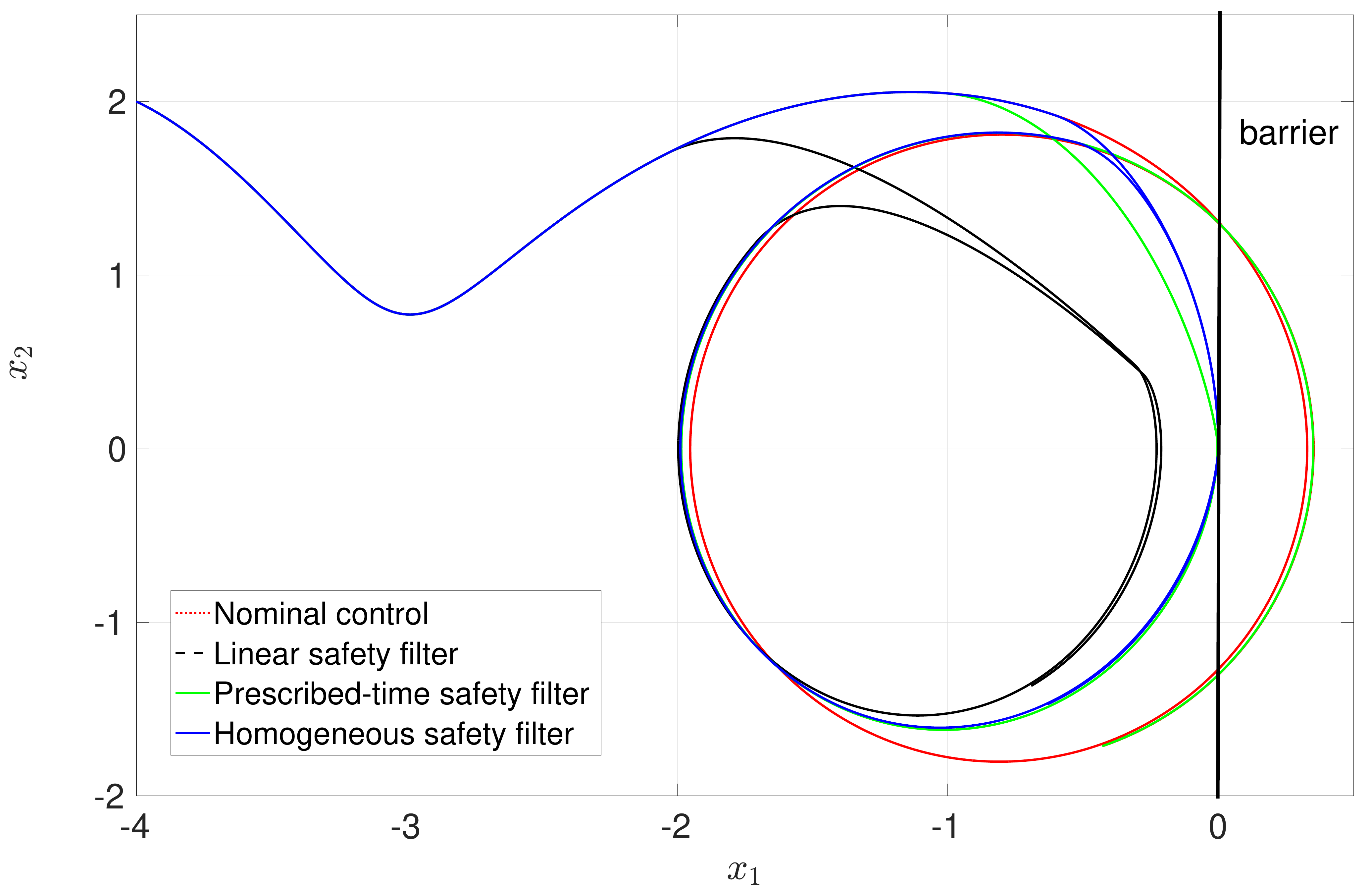}
\caption{Trajectories in $(x_1,x_2)$-plane for $x_0=(-4\;\; 2)^{\top}$.}
\label{fig:x_1x_2}
\end{figure}

To illustrate such ``safety filtering'' through a simulation, we take a nominal control  given by 
\begin{equation}
    u_{\rm nom}=-4\left(x_1+\sin\left(\tfrac{\pi t}{2}\right)+0.8\right)-4\left(x_2+\tfrac{\pi}{2} \cos\left(\tfrac{\pi t}{2}\right)\right)\,,
\end{equation}
which periodically and persistently attempts to violate the safety condition $x_1\leq 0$, while also periodically retreating from such an attempt. In this simulation we compare three safety filters: a linear one based on using  \cite{KrsticBement2006:TAC} with \eqref{eq-min-filter}, a homogeneous one based on the results of this paper, and a prescribed-time (PT) safety filter introduced in  \cite{abel2022prescribedtime}.
The simulation results 
are shown in Figures \ref{fig:tx_1} and \ref{fig:x_1x_2}. 

But let us first examine Figure \ref{fig:Omega}, which presents the positively invariant sets $\Omega$ and $\Omega_r$ for, respectively, the linear nonovershooting controller and homogeneous nonovershooting controller with $r=\|x_0\|=\sqrt{x_0^{\top}Px_0}\approx 6.6348$. 
The sets define the "safety zones" outside of which the corresponding controllers cannot guarantee the absence of overshoots. In the case of the fixed-time safety filter this zone is adaptive and depends of $\max_{\tau \in [0,t]} \|r(\tau)\|$. The larger a zone, the less conservative the override of the nominal controller. Obviously, the homogeneous nonovershooting controller has  a larger positively invariant set than the linear controller (at least close to the origin). The PT controller is a time-varying linear feedback \cite{abel2022prescribedtime}, so its positively invariant set is not defined. 


 \begin{figure}[t]\centering
\includegraphics[width=60mm]{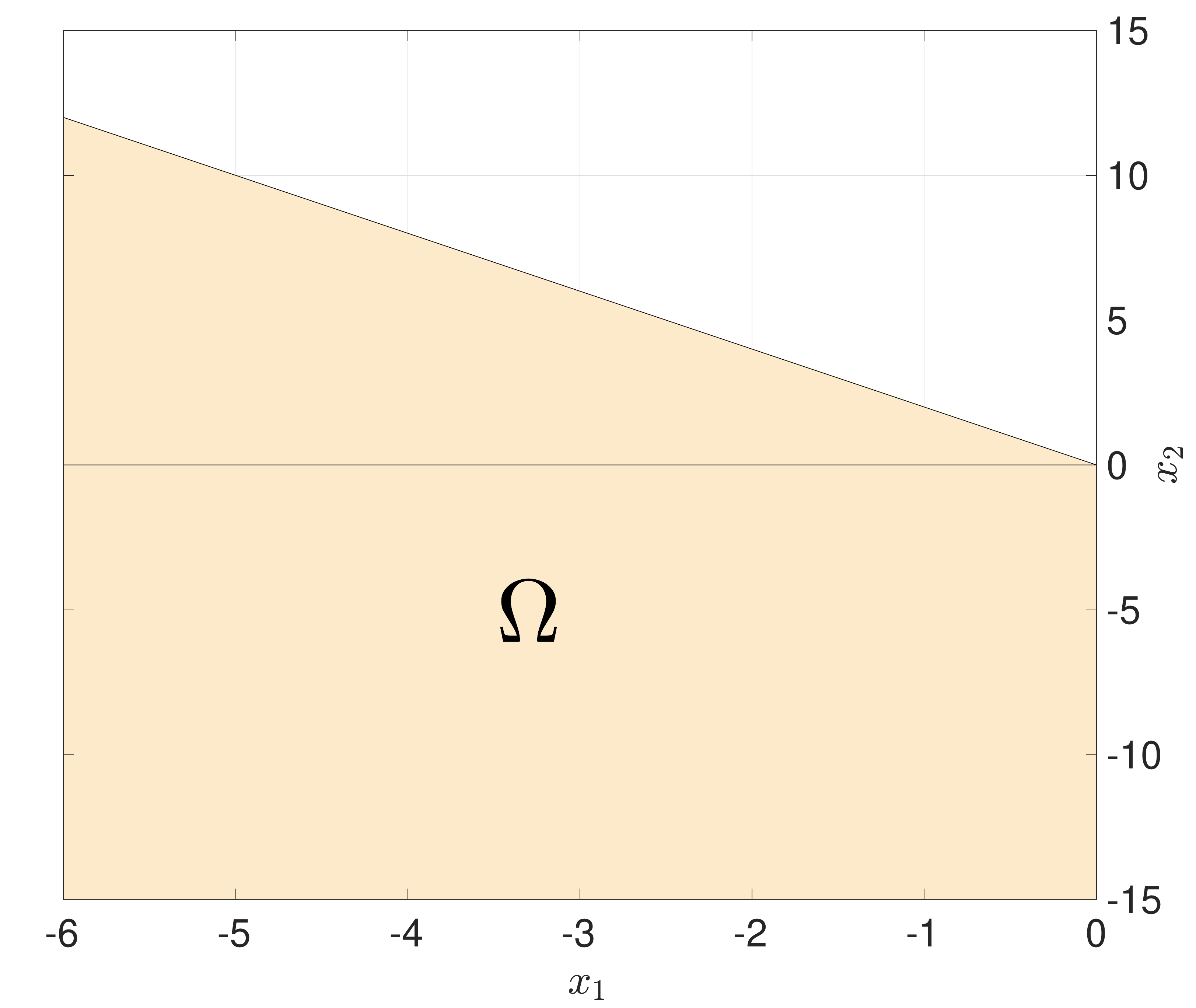}\\
\includegraphics[width=58mm]{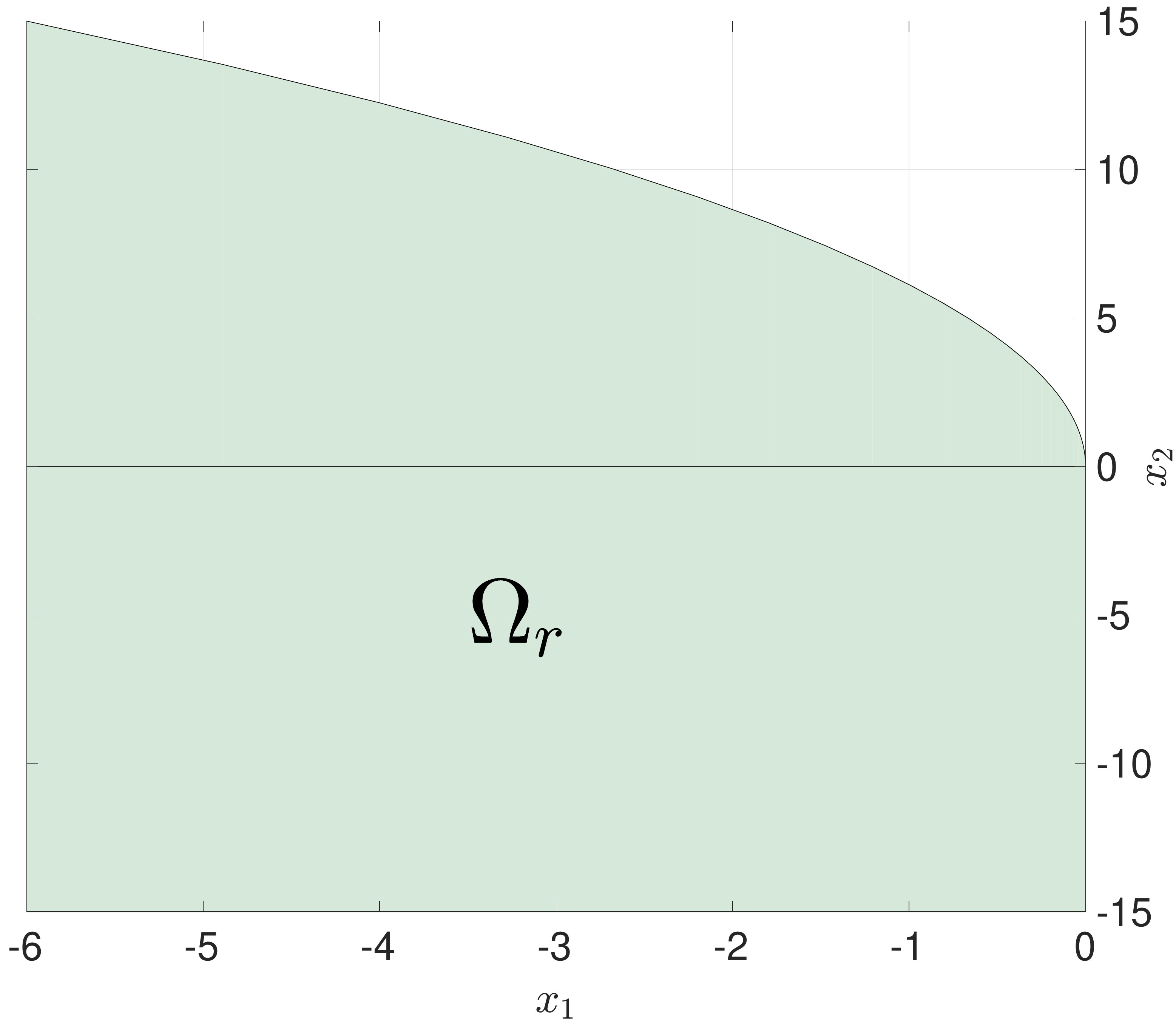}
\caption{Positively invariant sets $\Omega$ and $\Omega_r$  for linear and homogeneous nonovershooting controllers, respectively}
\label{fig:Omega}
\end{figure}

The numerical simulations in Figures \ref{fig:tx_1} and \ref{fig:x_1x_2} show that, in comparison with the linear safety filter, both the homogeneous and the PT safety filters   perform highly unconservative overrides of the nominal controller. The main difference between the PT and homogeneous safety filters is that the operation time of the PT safety filter terminates after the prohibition on $x_1$ being positive ends, which is at $T = 4$ seconds in our case, whereas the homogeneous safety filter continues to restrict the operation to $x_1\leq 0$ for all time. 

However, it is remarkable, and evident from Figure \ref{fig:tx_1}, that the homogeneous safety filter keeps $x_1(t)$ strictly negative for no longer than the fixed time $T=4$ sec (and, in fact, for little more than 1 sec), in response to the attempts of the nominal control to make $x_1$ positive, whereas the more conservative linear safety filter keeps $x_1(t)$ negative all the time, greatly distorting the the resulting (dashed) solution $x_1(t)$ compared to the  nominal solution (red). 

It is evident from Figure \ref{fig:tx_1} that the safety override with the homogeneous safety filter recurs (twice during the time shown; note the two flat tops of the blue curve). This is the result of the fact that the periodic nominal control keeps attempting to violate the safety boundary and keeps retreating. 



%
%

\section{Conclusions}\label{sec:con}

The paper proposes a two-step procedure  for a nonovershooting homogeneous control design for the integrator chain.
First, we construct a linear nonovershooting controller using a backstepping procedure \cite{KrsticBement2006:TAC}. 
Next, we transform (``upgrade'') the linear feedback law to a generalized homogeneous one. The obtained homogeneous  controller is globally uniformly bounded by a number dependent on control parameters $T$ and $r$, which defines a convergence time of the system initiated in the ball of  radius $r$. The main advantage of the proposed scheme is the simplicity of the control design and parameter tuning. Moreover, the  proposed procedure allows a simple upgrade of an existing linear nonovershooting controller to a homogeneous ones. The main disadvantage is the necessity 
to use a special computational procedure (see, e.g. \cite{Polyakov_etal2015:Aut}, \cite{Wang_etal2020:ICRA}) to implement the homogeneous controller in practice for high order systems ($n\geq 3$). 
The positively invariant set of the homogeneous control system is larger than the positively invariant set of the linear control system utilized for the ``upgrade''. This allows a better (less conservative) overriding of a potentially unsafe nominal controller for the safety filter design. 

%
%

\section{Appendix}\label{sec:app}
\subsection{Cubic equation (Cardano formula)}
 Consider the cubic equation
   \begin{equation}
  z^3+pz^2+qz+r=0, \quad p,q,r\in\R.
  \end{equation}
 Introduce the following numbers
 \begin{equation}\label{eq:C}
  C_1=\sqrt[3]{\tfrac{\Delta_1-\sqrt{\Delta_1^2-4\Delta_0^3}}{2}}, \quad   C_2=\sqrt[3]{\tfrac{\Delta_1+\sqrt{\Delta_1^2-4\Delta_0^3}}{2}}
 \end{equation}
 where
 $
  \Delta_0=p^2-3q, \Delta_1=2p^3-9pq+27r.
$

 If $\Delta_1^2-4\Delta_0^3\geq 0$ then
 \begin{equation}
  z_0=-\tfrac{p+C_1+C_2}{3}
 \end{equation}
 is a real root of the cubic equation.
\subsection{Quartic equation (Ferrari formula)}

  Let us consider the quartic equation
  \begin{equation}
   V^4+aV^2+bV+c=0, \quad a,b,c\in\R
  \end{equation}
  and the adjoint cubic equation
  \begin{equation}
  z^3+2az^2+(a^2-4c)z-b^2=0.
  \end{equation}
  If $z_0\in\R$ is a real root of the cubic equation then the roots of the quartic one are 
  \begin{equation}
  \begin{smallmatrix}
   V_1=\frac{-\sqrt{z_0}+\sqrt{-z_0-2a+2b/\sqrt{z_0}}}{2}, &
   V_2=\frac{-\sqrt{z_0}-\sqrt{-z_0-2a+2b/\sqrt{z_0}}}{2},\\
   V_3=\frac{\sqrt{z_0}+\sqrt{-z_0-2a-2b/\sqrt{z_0}}}{2}, &
   V_4=\frac{\sqrt{z_0}-\sqrt{-z_0-2a-2b/\sqrt{z_0}}}{2}.
   \end{smallmatrix}
  \end{equation}
%

\subsection{Auxiliary results}

The following lemma studies some properties of the vectors $h_i$, which are utilized below for a non-overshooting control design.
\begin{lemma}\label{lem:D_i}
If the diagonal matrix $D_{i}\in \R^{n\times n}$ is given by
\begin{align}
 D_i=D_{i-1}+\left(
\begin{smallmatrix}
I_{i-1} & 0\\
0  & 0 
\end{smallmatrix}
\right)=\left( 
\begin{smallmatrix}
 i-1   &    0  &   ... & ... & ... & ...  & 0\\
0   &  i-2 &   ... & ... & ...  & ...  &  0\\
...  &  ...  &  ... & ... &  ...  &  ...  &  ...\\
0   & ...   &   ... & 1  &   0 & ... & 0\\
0  & ...   &  ... &  0 &   0 & ... & 0\\
...  & ...  &   .. & ... &  ... & ... & ...\\
0  & ...  &  ... & 0 &  0 & ... & 0\\
\end{smallmatrix}
\right),& \nonumber\\
i=2,\ldots,n, \quad D_1=0, &
\end{align}
then for $i=2,\ldots,n$ one has
\begin{itemize}
    \item[1)] $D_{i-1}A=AD_i$;
    \item[2)] $Ae^{D_{i}s}=e^{D_{i-1}s}A,\forall s\in \R$;
    \item[3)] $h_i\left(
\begin{smallmatrix}
I_{i} & 0\\
0  & 0 
\end{smallmatrix}
\right)=h_i$ and $h_ie^{s\left(
\begin{smallmatrix}
I_{i} & 0\\
0  & 0 
\end{smallmatrix}
\right)}=e^sh_i,\quad\forall s\in \R$; 
    \item[4)] $h_{i}D_{i}=(i-1)\lambda h_{i-1}$; 
    \item[5)] $h_iD_{n}=(n-i)h_i+(i-1)\lambda h_{i-1}$.
\end{itemize}
\end{lemma}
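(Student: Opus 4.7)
The plan is to reduce all five properties to direct entry-wise matrix computations based on explicit closed-form expressions for $D_i$ and $h_i$. First, unrolling the given recursion for $D_i$ yields $D_i = \mathrm{diag}(i-1, i-2, \ldots, 1, 0, \ldots, 0)$; equivalently, $(D_i)_{jj} = \max(i-j, 0)$. Separately, applying the binomial theorem to $(A+\lambda I)^{i-1}$ (using that $A$ is nilpotent and commutes with $\lambda I$), one obtains $h_i = -\sum_{j=0}^{i-1}\binom{i-1}{j}\lambda^{i-1-j} e_{j+1}^{\top}$, so $(h_i)_m = -\binom{i-1}{m-1}\lambda^{i-m}$ for $m = 1,\ldots,i$ and $(h_i)_m = 0$ for $m > i$. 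All five items will then be verified componentwise from these formulas.

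With these closed forms in hand, properties 1)--3) are immediate. For 1), both $D_{i-1}A$ and $AD_i$ are supported only on the superdiagonal: $(D_{i-1}A)_{j,j+1} = (D_{i-1})_{jj} = \max(i-1-j,0)$ and $(AD_i)_{j,j+1} = (D_i)_{j+1,j+1} = \max(i-j-1,0)$, which coincide. Property 2) follows by iterating 1) to get $D_{i-1}^k A = A D_i^k$ for all $k \geq 0$ and then summing the exponential series. For 3), the block-diagonal matrix with $I_i$ in the upper-left block and zero elsewhere is the projector onto the first $i$ coordinates; since $h_i$ is supported on exactly those coordinates, the first identity is immediate, and the exponential of $s$ times that projector scales precisely those coordinates by $e^s$, producing $e^s h_i$.

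The heart of the argument lies in 4) and 5), which both reduce to a single elementary binomial identity. Because $D_i$ is diagonal, $(h_i D_i)_m = (h_i)_m (D_i)_{mm}$, which vanishes for $m \geq i$. For $m = 1,\ldots,i-1$, the claim $h_i D_i = (i-1)\lambda h_{i-1}$ becomes
\begin{equation*}
\binom{i-1}{m-1}(i-m) = (i-1)\binom{i-2}{m-1},
\end{equation*}
and both sides equal $(i-1)!/[(m-1)!(i-m-1)!]$. Property 5) uses the same mechanism: $(h_i D_n)_m = (n-m)(h_i)_m$ for $m \leq i$ and vanishes otherwise, and the decomposition $n-m = (n-i) + (i-m)$ splits the product into an $(n-i)h_i$ contribution and, via the identity just quoted, an $(i-1)\lambda h_{i-1}$ contribution; at the boundary index $m = i$ only the $(n-i)h_i$ term survives, since $(h_{i-1})_i = 0$.

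The main obstacle is careful bookkeeping at boundary indices rather than any genuine mathematical difficulty: in property 4), the equality is trivially true for $m \geq i$ because both sides vanish, while in property 5) one must separately handle $m = i$ (where the support of $h_{i-1}$ has just ended) and $m = n$ (where $D_n$ has its zero diagonal entry). Beyond these checks the proof is mechanical, driven entirely by the nilpotent shift structure of $A$ and the binomial expansion of $(A+\lambda I)^{i-1}$.
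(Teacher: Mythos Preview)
Your proof is correct. For items 1)--3) you and the paper do essentially the same thing (entrywise check for 1), series argument for 2), support argument for 3)). The genuine difference is in items 4) and 5). The paper proves 4) by induction on $i$, using the recursion $h_i = h_{i-1}A + \lambda h_{i-1}$ together with $AD_i = D_{i-1}A$ and $D_i = D_{i-1} + \left(\begin{smallmatrix} I_{i-1} & 0 \\ 0 & 0 \end{smallmatrix}\right)$, and then obtains 5) by peeling off the recursion $D_n = D_{n-1} + \left(\begin{smallmatrix} I_{n-1} & 0 \\ 0 & 0 \end{smallmatrix}\right)$ down to $D_i$ and invoking 4). You instead write $h_i$ explicitly via the binomial expansion of $(A+\lambda I)^{i-1}$ and verify 4) componentwise through the single identity $\binom{i-1}{m-1}(i-m) = (i-1)\binom{i-2}{m-1}$, then get 5) from the splitting $n-m = (n-i)+(i-m)$. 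Your route is more self-contained and makes the boundary cases ($m=i$, $m=n$) completely transparent; the paper's inductive route is slightly more structural and avoids ever writing $h_i$ in closed form, which keeps the argument closer to the recursive definitions actually used elsewhere in the paper. Either way the content is the same binomial-coefficient fact, just packaged differently.
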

\begin{proof}
1) Simple calculations show
\begin{align}
   AD_{i}=& \left(
\begin{smallmatrix}
 0 & 1& 0 & ... &0\\
 0 & 0& 1 & ... &0\\
 ... & ...& ... & ... &...\\
  0 & 0& 0 & ... &1\\
   0 & 0& 0 & ... &0
\end{smallmatrix}
\right)\!\left( 
\begin{smallmatrix}
 i-1   &    0  &   ... & ... & ... & ...  & 0\\
0   &  i-2 &   ... & ... & ...  & ...  &  0\\
...  &  ...  &  ... & ... &  ...  &  ...  &  ...\\
0   & ...   &   ... & 1  &   0 & ... & 0\\
0  & ...   &  ... &  0 &   0 & ... & 0\\
...  & ...  &   .. & ... &  ... & ... & ...\\
0  & ...  &  ... & 0 &  0 & ... & 0\\
\end{smallmatrix}
\right)
\nonumber\\ =& 
\left(
\begin{smallmatrix}
 0 & i-2& 0 & ... &0\\
 0 & 0& i-3 & ... &0\\
 ... & ...& ... & ... &...\\
  0 & 0& 0 & ... &0\\
   0 & 0& 0 & ... &0
\end{smallmatrix}
\right)\!=\!D_{i-1}A.
\end{align}
2) Using $D_{i-1}A=AD_i$ we obtain
\begin{align}
    Ae^{D_{i}s}=&\sum_{k=0}^{\infty} \frac{AD^k_{i}s^k}{k!}=\sum_{k=0}^{\infty} \frac{D_{i-1}AD^{k-1}_{i}s^k}{k!}
    \nonumber \\ =&\cdots =\sum_{k=0}^{\infty} \frac{D_{i-1}^kAs^k}{k!}=e^{D_{i-1}s}A.
\end{align}
3) By construction (see, the formula \eqref{eq:h_i}), only first $i$ components of the vector $h_i$ are nonzero. The latter means that $h_{i}\left(
\begin{smallmatrix}
I_{i} & 0\\
0  & 0 
\end{smallmatrix}
\right)=h_i$ and 
\begin{equation}
 h_ie^{s\left(
\begin{smallmatrix}
I_{i} & 0\\
0  & 0 
\end{smallmatrix}
\right)}=\sum_{k=0}^{\infty} \frac{h_i\left(
\begin{smallmatrix}
I_{i} & 0\\
0  & 0 
\end{smallmatrix}
\right)^ks^k}{k!}=\sum_{k=0}^{\infty} \frac{h_is^k}{k!}=e^sh_i.
\end{equation}
4) Since $h_2D_2=(-\lambda, 0,\ldots,0)=(i-1)\lambda h_1$ for $i=2$ then, by induction, for $i\geq 3$ we derive
\begin{align}
    h_{i}D_i=&h_{i-1}AD_i+\lambda h_{i-1} D_{i}
    \nonumber\\
    =&h_{i-1}D_{i-1}A+\lambda h_{i-1} \left(D_{i-1}+\left(
\begin{smallmatrix}
I_{i-1} & 0\\
0  & 0 
\end{smallmatrix}
\right)\right)
\nonumber\\ =&
   (i-2)\lambda h_{i-2}A+(i-2)\lambda^2 h_{i-2}+\lambda h_{i-1}\nonumber\\ =&
(i-2)\lambda h_{i-1}+\lambda h_{i-1}\!=\!(i-1)\lambda h_{i-1}.
\end{align}
5) Since only first $i$ elements of $h_i$ are nonzero then
\begin{align}
h_i D_n=&h_i(I_{n-1}+D_{n-1})=h_i+h_{i}D_{n-1}
\nonumber\\ =&
\cdots =(n-i)h_i+h_iD_i,
\end{align}
where $h_iD_i=(i-1)\lambda h_{i-1}$ as shown above.
\end{proof}
Other useful properties of the vectors $h_i$ and the matrices $D_i$ are given by the following corollary.
\begin{corollary}\label{cor:q_i}
\begin{itemize}
\item[i)] If $\lambda>0$
then 
\begin{equation}
    h_i\dn(s)x\geq  0, \quad \forall s\geq 0
\end{equation}
provided that $h_ix\geq 0$ for all $i=1,\ldots,n$;
\item[ii)] $H(A+BK)=(A-\lambda I_n )H$, where $K\in \R^{1\times n}$ is given by \eqref{eq:linear_control}
and
    \begin{equation}\label{eq:HH}
        H=\left(\begin{smallmatrix} 
        h_1\\
        h_2\\
        \vdots \\
        h_n
        \end{smallmatrix}\right)\in \R^{n\times n};
    \end{equation}
\item[iii)]     \begin{equation}
        HG_{\dn}=\left(G_{\dn}+\lambda(nI_n-G_{\dn})A^{\top}\right) H
    \end{equation}
    where $G_{\dn}=I_n+D_n$.
\end{itemize}
\end{corollary}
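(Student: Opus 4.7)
The plan is to establish the three items in the order stated, using Lemma \ref{lem:D_i} as the main workhorse. Throughout, I would exploit the decomposition $G_{\dn}=I_n+D_n$ and the fact that $I_n$ and $D_n$ commute, which yields $\dn(s)=e^{s}e^{sD_n}$. This reduces questions about the dilation to questions about the nilpotent diagonal matrix $D_n$, on which Lemma \ref{lem:D_i} already gives me the identity $h_i D_n=(n-i)h_i+(i-1)\lambda h_{i-1}$.

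For (i), I would set $\psi_i(s):=h_i e^{sD_n}x$ and note that $h_i\dn(s)x=e^{s}\psi_i(s)$, so the claim reduces to showing $\psi_i(s)\ge 0$ for $s\ge 0$ given $\psi_i(0)=h_ix\ge 0$. Differentiating in $s$ and invoking item 5 of Lemma \ref{lem:D_i},
\begin{equation}
\psi_i'(s)=h_iD_n e^{sD_n}x=(n-i)\psi_i(s)+(i-1)\lambda\,\psi_{i-1}(s),
\end{equation}
with the convention $\psi_0\equiv 0$. Since $\lambda>0$, this is a lower-triangular linear ODE with nonnegative off-diagonal coupling, i.e., a cooperative (Metzler) system, so nonnegativity of initial data propagates forward in $s$. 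A one-line induction on $i$ makes this explicit.

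For (ii), I would compute row by row. Writing $B=e_n$, the row $h_iB$ is the $n$-th component of $h_i=-e_1^{\top}(A+\lambda I)^{i-1}$. Binomial expansion together with $e_1^{\top}A^k=e_{k+1}^{\top}$ for $k\le n-1$ gives $e_1^{\top}(A+\lambda I)^{i-1}e_n=0$ whenever $i<n$, and equals $1$ when $i=n$. Hence $h_iB=0$ for $i<n$ and $h_nB=-1$. Using the recursion $h_iA=h_{i+1}-\lambda h_i$ from the very definition of $h_{i+1}$, the $i$-th row of $H(A+BK)$ is therefore $h_{i+1}-\lambda h_i$ for $i<n$, and for $i=n$ it is $h_nA-K=h_nA-h_n(A+\lambda I)=-\lambda h_n$. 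These are exactly the rows of $(A-\lambda I_n)H$.

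For (iii), I would again work row by row. Using $G_{\dn}=I_n+D_n$ and item 5 of Lemma \ref{lem:D_i}, the $i$-th row of $HG_{\dn}$ equals
\begin{equation}
h_i+h_iD_n=(n-i+1)h_i+(i-1)\lambda h_{i-1}.
\end{equation}
On the right-hand side, $G_{\dn}H$ contributes $(n-i+1)h_i$ in row $i$, since $(G_{\dn})_{ii}=n-i+1$. The diagonal matrix $nI_n-G_{\dn}$ has $(i,i)$-entry $i-1$, and $A^{\top}H$ has $i$-th row $h_{i-1}$ for $i\ge 2$ and $0$ for $i=1$ (with the convention $h_0=0$). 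Thus $\lambda(nI_n-G_{\dn})A^{\top}H$ contributes $(i-1)\lambda h_{i-1}$ to row $i$, matching the LHS.

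None of the three steps presents a serious obstacle; the only place where care is required is the bookkeeping in item (iii), since one must track simultaneously which rows of $A^{\top}H$ vanish and how the diagonal weights $(i-1)$ and $(n-i+1)$ line up. The real conceptual content has already been packaged into Lemma \ref{lem:D_i}, and the corollary amounts to translating those scalar identities on rows into the matrix equalities that the homogeneous design in Theorem \ref{thm:hom_safty_control} needs.
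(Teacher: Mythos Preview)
Your proposal is correct and essentially follows the paper's own proof. The only noteworthy variation is in part (i): the paper introduces auxiliary functions $q_i(s)=h_ie^{sD_i}x$ (a different matrix $D_i$ for each $i$), invokes item~4) of Lemma~\ref{lem:D_i} to get $q_i'(s)=(i-1)\lambda e^{s}q_{i-1}(s)$, and only at the end converts back via $h_i\dn(s)=e^{(n-i+1)s}h_ie^{sD_i}$; you instead use the single factorisation $\dn(s)=e^{s}e^{sD_n}$ together with item~5), obtaining directly the cooperative system $\psi_i'=(n-i)\psi_i+(i-1)\lambda\psi_{i-1}$. Your route is slightly more streamlined but rests on the same inductive idea and the same lemma; parts (ii) and (iii) match the paper's arguments almost verbatim.
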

\begin{proof}
i). Let us consider the functions $q_i: \R_+ \to \R$ defined as follows
\begin{equation}
    q_i(s)=h_ie^{sD_i}x, \quad s\geq 0. 
\end{equation}
Hence, we have $q_i(0)\geq h_ix$. 
Let us show, by induction, that these functions are non-decreasing and, consequently, non-negative on $\R_+$.
Indeed, since $D_1=0$ then $q_1(s)=h_1e^{sD_1}x=h_1x$ is non-decreasing and non-negative.
For $i\geq 2$  we derive
\begin{align}
    \frac{d}{dt}q_i(s)=
    & h_iD_ie^{sD_i}x=(i-1)\lambda h_{i-1}e^{sD_i}x
    \nonumber\\
    =&(i-1)\lambda e^{s}q_{i-1}(s), 
\end{align}
where the identities $h_iD_i=(i-1)\lambda h_{i-1}$ and $h_{i-1}e^{s\left(
\begin{smallmatrix}
I_{i-1} & 0\\
0  & 0 
\end{smallmatrix}
\right)}=e^sh_{i-1}$ (see, Lemma \ref{lem:D_i}) are utilized.
Since $\lambda>0$ then $\frac{d}{ds}q_{i}(s)\geq 0$ and, consequently, $q_i(s)\geq q_i(0), \forall s\geq 0$ provided that 
$q_{i-1}(s)\geq 0, \forall s\geq 0$.  Hence, taking into account $h_i\dn(s)=e^{(n-i+1)s} h_i e^{D_i s } $ we  conclude that $h_i\dn(s)x\geq 0$ for all $s\geq 0$ provided that $h_ix\geq 0$ for all $i=1,2...,n$.

ii). The identity $H(A+BK)H^{-1}=-\lambda I_n +A$ is proven in 
Lemma \ref{lem:lin_safety} by means of the coordinate transformation $\varphi=Hx$.

iii).  Since $G_{\dn}=D_n+I_n$ then
\begin{align}
HG_{\dn}&=H+\left(\begin{smallmatrix} 
        h_1D_n\\
        h_2D_n\\
        h_3D_n\\
        \vdots\\
        h_nD_n
        \end{smallmatrix}\right)\\
        &= H+
        \left(\begin{smallmatrix} 
        (n-1)h_1\\
        (n-2)h_2 +\lambda h_1\\
        (n-3)h_3+2\lambda h_2\\
        \vdots\\
        (n-1)\lambda h_{n-1}
        \end{smallmatrix}\right)\\
        &=
      H+D_{n}H+
        \lambda\left(\begin{smallmatrix} 
        0\\
        h_1\\
        2h_2\\
        \vdots\\
        (n-1)h_{n-1}
        \end{smallmatrix}\right)\\
        &=
              G_{\dn}H+
        \lambda(nI_n-G_{\dn})\left(\begin{smallmatrix} 
        0\\
        h_1\\
        h_2\\
        \vdots\\
        h_{n-1}
        \end{smallmatrix}\right)\\
        &=
              G_{\dn}H+
        \lambda(nI_n-G_{\dn})A^{\top}H.
\end{align}

\end{proof}

\bibliographystyle{plain}
\bibliography{main.bib}

\end{document}